\definecolor{darkred}  {rgb}{0.5,0,0}
\definecolor{darkblue} {rgb}{0,0,0.5}
\definecolor{darkgreen}{rgb}{0,0.5,0}
\newcommand{\be}{\begin{equation}}
\newcommand{\ee}{\end{equation}}
\newcommand{\ba}{\begin{array}}
\newcommand{\ea}{\end{array}}
\newcommand{\bea}{\begin{eqnarray}}
\newcommand{\eea}{\end{eqnarray}}
\newcommand{\calH}{{\cal H }}
\newcommand{\calL}{{\cal L }}
\newcommand{\calM}{{\cal M }}
\newcommand{\calK}{{\cal K }}
\newcommand{\calP}{{\cal P }}
\newcommand{\calV}{{\cal V }}
\newcommand{\calW}{{\cal W }}
\newcommand{\calS}{{\cal S }}
\newcommand{\calG}{{\cal G }}
\newcommand{\calD}{{\cal D }}
\newcommand{\ZZ}{\mathbb{Z}}
\newcommand{\CC}{\mathbb{C}}
\newcommand{\EE}{\mathbb{E}}
\newcommand{\FF}{\mathbb{F}}
\newcommand{\prob}[1]{\mathrm{Pr}{\left[\, {#1} \, \right]}}
\newtheorem{dfn}{Definition}
\newtheorem{lemma}{Lemma}
\begin{document}

\title{Improved classical simulation of quantum circuits dominated by Clifford gates}

\author{Sergey \surname{Bravyi}}
\affiliation{IBM T.J. Watson Research Center,  Yorktown Heights  NY 10598}
\author{David \surname{Gosset}}
\affiliation{Walter Burke Institute for Theoretical Physics and Institute for Quantum Information and Matter, 
California Institute of Technology}

\date{\today}

\begin{abstract}
The Gottesman-Knill theorem asserts that a quantum
circuit composed of Clifford gates can be efficiently simulated on a classical computer. 
Here we revisit this theorem and extend it to quantum circuits
composed of Clifford and $T$ gates, where $T$ is the single-qubit $45^\circ$ phase shift.
We assume that the circuit outputs a bit string $x$ obtained by measuring some subset of $w$
qubits. Two simulation tasks are considered: 
(1)  computing the probability of a given output $x$, 
and (2) sampling $x$ from the output probability distribution.
It is shown that these tasks can be solved on a classical computer in time $poly(n,m)+2^{0.5 t} t^3$ and
$poly(n,m)+2^{0.23 t} t^3 w^3$ respectively, where
$t$ is the number of $T$-gates, $m$ is the total number of gates, and $n$ is the number of qubits.
The proposed simulation algorithms  may serve as a verification tool for medium-size quantum computations
that are dominated by Clifford gates.  
The main ingredient of both algorithms is a subroutine for approximating the norm of an $n$-qubit state which is given as a linear combination of $\chi$ stabilizer states. The subroutine runs in time  $\chi n^3 \epsilon^{-2}$, where $\epsilon$ is the relative error.  We  also  develop techniques for approximating tensor products of ``magic states"
by linear combinations of stabilizer states. To demonstrate the power of the new simulation methods, we performed a classical simulation of a hidden shift quantum algorithm 
with $40$ qubits, a few hundred Clifford gates, and nearly $50$ $T$-gates. 
\end{abstract}


\maketitle

\section{Introduction}
\label{sec:intro}

The path towards building a large-scale quantum computer will inevitably require verification and validation of small quantum devices. One way to check that such a device is working properly is to simulate it on a classical computer. This becomes impractical at some point because the cost of classical simulation typically grows
exponentially with the size of a quantum system. 
With this fundamental limitation in mind it is natural to ask how well we can do in practice.

Simulation methods which store a complete description of an $n$-qubit quantum state as a 
complex vector of size $2^n$  are limited to a small number of qubits $n\approx 30$. For example, a state-of-the art implementation has been used to simulate Shor's factoring algorithm with 31 qubits and roughly half a million gates \cite{liquid}. For certain restricted classes of quantum circuits it is possible to do much better \cite{gottesman99, aaronson2004improved, markov2008simulating, Nest2010, pashayan2015estimating}. Most significantly, the Gottesman-Knill theorem allows efficient classical simulation of quantum circuits composed of gates in the so-called Clifford group \cite{gottesman99}. In practice this allows one to simulate such circuits with thousands of qubits \cite{aaronson2004improved,liquid}. It also means that a quantum computer will need to use gates outside of the Clifford group in order to achieve useful speedups over classical computation. The full power of quantum computation can be recovered by adding a single non-Clifford gate to the Clifford group. A simple choice is the single-qubit $T=|0\rangle\langle 0| + e^{i\pi/4} |1\rangle\langle 1|$ gate; the Clifford+$T$ gate set obtained in this way is a natural instruction set for small-scale fault-tolerant quantum computers based on the surface code \cite{surfacecode,Fowler2012}, and has been at the centre of a recent renaissance in classical techniques for compiling quantum circuits \cite{KMM,Selinger, RossSelinger}.

In this paper we present two new  algorithms for classical  simulation of quantum circuits over the Clifford+$T$ gate set. The runtime of the algorithms is polynomial in the number of qubits and the number of Clifford gates in the circuit but exponential in the number of $T$ gates, or $T$-count. This exponential scaling is sufficiently mild that we anticipate a classical simulation of Clifford+$T$ circuits 
with a few hundred  qubits and $T$-count $t\le 50$ can be performed on
a medium-size computer cluster. Thus our algorithms may serve as a 
verification tool for small quantum computations dominated by Clifford gates. Such computations arise naturally if a
logical quantum circuit is realized fault-tolerantly
using some stabilizer code. The first demonstrations of logical quantum circuits using the surface code are likely to be dominated by Clifford gates due to a high implementation cost associated with logical $T$-gates~\cite{Fowler2013surface,Jones2013}. 

To describe our results let us fix some notation. A Clifford+$T$ quantum circuit of length $m$ acting on $n$ qubits is a
unitary operator $U=U_m \cdots U_2 U_1$, where each $U_j$ is a one- or two-qubit gate from the set $\{H,S,T,CNOT\}$  where $H$ is the  Hadamard gate and $S=|0\rangle\langle 0| + i|1\rangle\langle 1|$. We shall write $m=c+t$, where $c$ is the number of Clifford gates ($H,S,CNOT$)
and $t$ is the number of $T$-gates also known as the $T$-count.
Applying $U$ to the initial state $|0^n\rangle$
and measuring some fixed output register  $Q_{out} \subseteq [n]$ in the
$0,1$-basis generates a random bit string $x$ of length $w=|Q_{out}|$. A string $x$
appears with probability 
\begin{equation}
\label{Pout1}
P_{out}(x)=\langle 0^{n} | U^\dag \Pi(x) U |0^{ n}\rangle,
\end{equation}
where $\Pi(x)$  projects $Q_{out}$ onto the basis state $|x\rangle$
and acts trivially on the remaining qubits.

Our first result is a classical Monte Carlo algorithm that 
approximates the probability $P_{out}(x)$
for a given string $x\in \{0,1\}^w$ with a specified relative error $\epsilon$
and a failure probability $p_f$.
 The algorithm has runtime 
\begin{equation}
\label{T1}
\tau=O\left(  (w+t)(c+t) +(n+t)^3 + 2^{\beta t} t^3 \epsilon^{-2} \log(p_f^{-1})\right),
\end{equation}
where $\beta\le 1/2$ is a constant that depends on the implementation details. 
For example, assuming that $\epsilon$ and $p_f$ are some
fixed constants and $w\le t\le n\le c$, the runtime becomes
\[
\tau=O(n^3+ct + 2^{\beta t} t^3).
\]
Our second result is a classical algorithm that
allows one to sample the output string $x$ from a distribution which is
$\epsilon$-close to $P_{out}$ with respect to the $L_1$-norm. The sampling algorithm has runtime
\begin{equation}
\label{T2}
\tau=\tilde{O}\left( w(w+t)(c+t) + w(n+t)^3 + 2^{\gamma t} t^3 w^3 \epsilon^{-4} \right), 
\end{equation}
where the $\tilde{O}$ notation hides a factor logarithmic in $w$ and $\epsilon^{-1}$, and 
\begin{equation}
\label{T3}
\gamma\le -2\log_2{\left(\cos{(\pi/8)}\right)}\approx 0.228
\end{equation}
is a constant that
depends on the implementation details. 
We expect the sampling algorithm to be practical when $w$ is small and $\epsilon$ is not too small.
For example, assuming that 
the circuit outputs a single bit ($w=1$), $\epsilon$ is a fixed constant,
and $t\le n\le c$, the runtime becomes
\[
\tau=O(n^3 + ct + 2^{\gamma t} t^3).
\] 
Both algorithms can be divided into   independent subroutines with a runtime $O(t^3)$ each
and thus support a large amount of parallelism. 
We provide pseudocode for the main subroutines used in the algorithms and a timing analysis for
the MATLAB implementation~\footnote{{T}he MATLAB implementation
of the sampling algorithm is available upon request to the authors.}
in the Supplemental Material.

Since the simulation runtime is likely to be dominated by the 
terms exponential in $t$, one may wish to minimize 
the exponents $\beta$, $\gamma$ in Eqs.~(\ref{T1},\ref{T2}).
These exponents are related to the  stabilizer rank~\cite{BSS15} of a magic state
\[
|A\rangle = 2^{-1/2}( |0\rangle + e^{i\pi/4} |1\rangle).
\]
Recall that a $t$-qubit state is called a stabilizer state
if it has the form $V|0^t\rangle$, where $V$ is a  quantum circuit
composed of Clifford gates. 
Stabilizer states form an overcomplete basis in the Hilbert space
of $t$ qubits. Let $\chi_t(\delta)$ be the smallest integer $\chi$
such that $A^{\otimes t}$ can be approximated with an error
at most $\delta$ by a linear combination of $\chi$ stabilizer states (here the approximating state $\psi$ should satisfy $|\langle A^{\otimes t}|\psi\rangle|^2\geq 1-\delta$). The runtime scaling in Eq.~(\ref{T1}) holds for any exponent $\beta$
such that $\chi_t(0)=O(2^{\beta t})$ for all sufficiently large $t$.
Using the results of~\cite{BSS15} one can choose
$\beta = (1/6)\log_2{(7)} \approx 0.47$.
Stronger upper bounds on the stabilizer rank $\chi_t(0)$ could
improve the runtime scaling in Eq.~(\ref{T1}).
Likewise, the runtime scaling in 
Eq.~(\ref{T2}) holds for any exponent $\gamma$ such that
 $\chi_t(\delta)=O(2^{\gamma t})$ 
for any constant $\delta>0$ and all 
sufficiently large $t$.
For simplicity here we assumed that the precision parameter $\epsilon$
in Eq.~(\ref{T2}) is a constant. 
In this paper we propose a systematic method of finding approximate stabilizer decompositions
of $A^{\otimes t}$ which yields an upper bound 
$\chi_t(\delta)=O(2^{\gamma t} \delta^{-1})$, where $\gamma\approx 0.228$, see Eq.~(\ref{T3}).
We conjecture  that this upper bound is tight.

\begin{figure*}
\includegraphics[height=6cm]{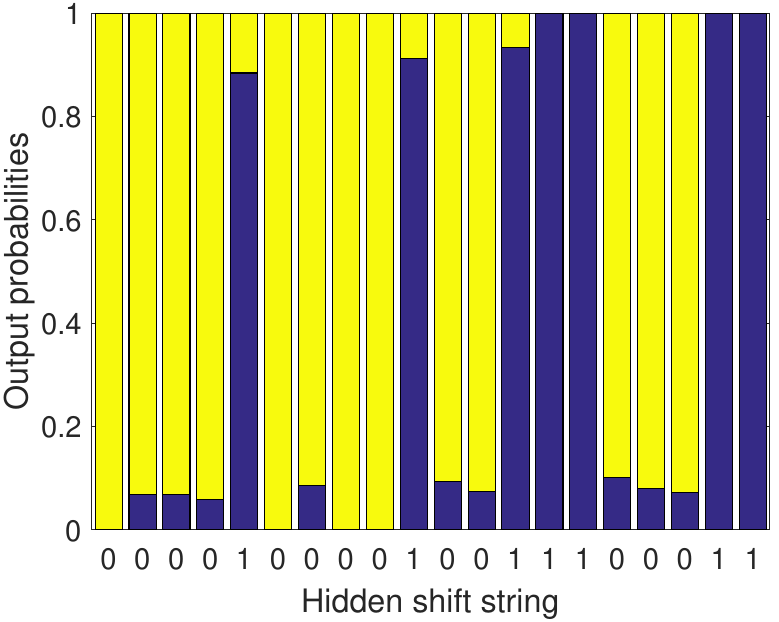}
$\quad\quad\quad$
\includegraphics[height=6cm]{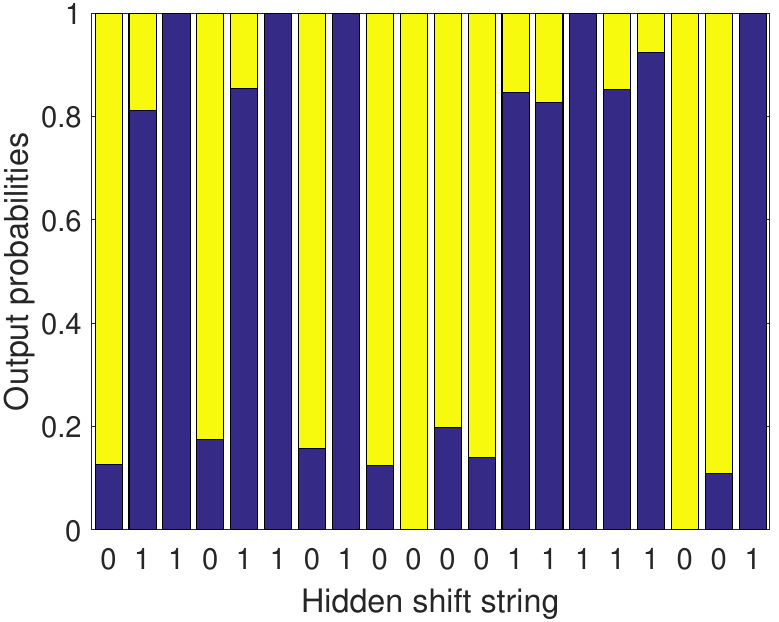}
\caption{Output single-qubit probability distributions
obtained by a classical simulation of the hidden shift quantum
algorithm on $n=40$ qubits. Only one half of all qubits are shown
(qubits $21,22,\ldots,40$). The final state of the algorithm is $|s\rangle=U|0^n\rangle$, where $s$ is
the hidden shift string to be found and $U$ is a Clifford+$T$ circuit
with the $T$-count $t=40$ (left) and $t=48$ (right).
In both cases the circuit $U$ contains a few hundred Clifford gates.
For each qubit the probability of measuring `1'  in the final state is indicated in blue. 
The $x$-axis labels indicate the correct hidden shift bits.
The entire simulation took several hours on  a  laptop computer.
}
\label{fig:plots}
\end{figure*}

We implemented our classical sampling algorithm in MATLAB and used it to simulate a class of benchmark quantum circuits 
on $n=40$  qubits, with a few hundred  Clifford gates,
and $T$-count $t\le 48$. Specifically, we simulated a quantum algorithm which solves the hidden shift problem for non-linear Boolean functions~\cite{Rotteler09}.
An instance of the hidden shift problem is defined by a pair of oracle functions
$f,f'\, : \, \FF_2^n\to \{\pm 1\}$ and a hidden shift  string $s\in \FF_2^n$. 
It is promised that $f$  is a  bent (maximally non-linear) function, that is, 
the Hadamard transform of $f$ takes values $\pm 1$.
It is also promised that $f'$ is the Hadamard transform of the shifted version of $f$, that is, 
\begin{equation}
f'(x)=2^{-n/2} \sum_{y\in \FF_2^n} (-1)^{x\cdot y}  f(y\oplus s) \quad \mbox{for all $x\in \FF_2^n$}.
\label{eq:hadamard}
\end{equation}
Here $\oplus$ stands for the bit-wise XOR.
The goal is to learn the hidden shift $s$ by making  as few queries to
$f$ and $f'$ as possible. The classical query complexity of this problem is
known to be linear in $n$, see Theorem~8 of Ref.~\cite{Rotteler09}.
In the quantum setting, $f$ and $f'$ are given as diagonal $n$-qubit unitary operators $O_f$ and $O_{f'}$
such that $O_f|x\rangle=f(x)|x\rangle$ and
$O_{f'}|x\rangle=f'(x)|x\rangle$ for all $x\in \FF_2^n$.
A quantum algorithm can learn $s$ by making 
a single query to each of these oracles, as can be seen from the identity~\cite{Rotteler09}
\begin{equation}
|s\rangle= U|0^n\rangle, \quad U\equiv H^{\otimes n} O_{f'} H^{\otimes n} O_{f}H^{\otimes n}.
\label{eq:Udescription}
\end{equation}
This hidden shift problem is ideally suited for our benchmarking task for two reasons. First, the algorithm produces a deterministic output, i.e., the output is a computational basis state $|s\rangle$ for some $n$-bit string $s$. Because of this we achieve the most favorable runtime scaling in Eq.~(\ref{T2})
since each bit of $s$  can be learned  by  calling the sampling algorithm with a single-qubit output register
($w=1$) and a constant statistical error $\epsilon$. 
Second,  the $T$-count of the algorithm can be easily controlled by 
choosing a suitable bent function. 
Indeed, the non-oracle part of the algorithm consists only of Hadamard gates. 
We  show that for a large class of bent functions  $f$ (from the so-called Maiorana-McFarland family)
the oracles $O_f$ and $O_{f'}$ 
can be constructed using Clifford gates and only a few $T$ gates, see the Supplemental Material for details.

The numerical simulations were performed for two randomly generated 
instances of the hidden shift problem with $n=40$ qubits. For each of these instances we simulated the quantum circuit for the hidden shift algorithm, i.e., the circuit implementing the unitary $U$ described above. 
The $T$-counts of the two simulated circuits are $t=40$ and $t=48$ respectively. Since the hidden shift $s$ is known beforehand, we are able to verify correctness of the simulation. Our results are presented in Fig.~\ref{fig:plots}. As one can see from the plots, the output probability distribution of each qubit has most of its weight at the corresponding value of the hidden shift bit. Only the output probabilities for qubits $21,22,\ldots,40$ are shown because our algorithm perfectly recovered the first half of the hidden shift bits $1,2,\ldots,20$. 
This perfect recovery occurs due to the special structure of the chosen bent functions,
see the Supplemental Material for further details.

The rest of the paper is organized as follows. 
In Section~\ref{sec:sketch}  we give an overview of our main techniques.  
In Section~\ref{sec:stabilizer} we summarize some
basic facts concerning stabilizer states. We present our classical simulation algorithms for Clifford+$T$ circuits 
in Section~\ref{sec:algorithms}. Finally, we show how to approximate tensor products
of magic states by linear combinations of stabilizer states in Section~\ref{sec:magic}.
In the Supplemental Material we provide pseudocode for the main subroutines used in our algorithms, and we discuss further details of the simulations reported in Fig.~\ref{fig:plots}.

\section{Sketch of techniques}
\label{sec:sketch}

Following Ref.~\cite{BSS15},
we simulate a Clifford+$T$ circuit classically
using three basic steps.  First, each $T$-gate in the original circuit 
is replaced by a certain gadget that contains only 
Clifford gates and a $0,1$-measurement. The Clifford gates
may be classically controlled by the measurement outcome. 
The gadget consumes one copy of the magic state $|A\rangle$. 
This gives an equivalent `gadgetized' circuit
acting on a non-stabilizer initial state that contains $t$ copies of $|A\rangle$. We show how to remove all intermediate measurements from the gadgetized circuit 
by replacing the outcomes of these measurements by random uniform postselection bits.
Accordingly, we  replace the classically controlled Clifford gates 
by a suitable random ensemble of uncontrolled  Clifford gates. 
Second, the initial  magic state $A^{\otimes t}$ is  represented (exactly or approximately) as 
 a linear combination of $\chi\ll 2^n$ stabilizer states. 
The action of the gadgetized circuit on each term in this linear combination can be 
efficiently simulated using the standard Gottesman-Knill theorem
since the gadgetized circuit contains only Clifford gates.
This allows us to represent the final state before the measurement of $Q_{out}$
as a linear combination of $\chi$ stabilizer states.
We simulate the measurement of $Q_{out}$ 
on this final state
independently for each term in the  linear combination
(we also have to simulate certain additional post-selective measurements  introduced at the 
first step).
This is possible due to the fact that $0,1$-measurements map stabilizer states
to stabilizer states. The final post-measurement state is  a linear combination of at most $\chi$
stabilizer states. 
The third and the most time consuming step is 
computing the norm of the post-measurement state. This norm
is simply related
to    the quantity of interest,
such as the output probability $P_{out}(x)$. 
We show how to obtain a square-root speedup in this step
compared with Ref.~\cite{BSS15}
reducing the runtime scaling from $\chi^2$  to $\chi$.
This is achieved using a novel subroutine for approximating the norm of a linear
combination of stabilizer states. The subroutine has runtime $O(\chi t^3 \epsilon^{-2})$,
where $\chi$ is the number of terms in the  linear combination,
$t$ is the number of qubits, and $\epsilon$ is the relative error. 
We expect that this subroutine may find applications in other contexts.
We achieve a further speedup compared with Ref.~\cite{BSS15}
by reducing the scaling $\chi\approx 2^{0.47 t}$ to
$\chi \approx 2^{0.23 t}$ by developing techniques for approximate
stabilizer decompositions of $A^{\otimes t}$. Although in general the simulation algorithm based on approximate stabilizer decomposition cannot accurately compute individual probabilities of the output distribution, we show that it can be used for sampling from the output distribution with a small statistical error.

\section{Stabilizer formalism}
\label{sec:stabilizer}

Before moving further, let us state some facts concerning stabilizer groups and stabilizer states. Let  $\calP_n$ be the $n$-qubit Pauli group.
Any element of $\calP_n$ has the form 
$i^m P_1\otimes \cdots \otimes P_n$, where
each factor $P_a$ is either the identity or a single-qubit
Pauli operator $X,Y,Z$ and $m\in \ZZ_4$. 
An abelian subgroup $\calG\subseteq \calP_n$ is called
a stabilizer group if $-I\notin \calG$. 
Each stabilizer group has the form $\calG=\langle G_1,\ldots,G_r\rangle$
for some generating set of pairwise commuting self-adjoint Pauli operators $G_1,\ldots,G_r\in \calG$ such that $|\calG|=2^r$.
The integer $r$ is called the dimension of $\calG$ and is denoted $r=\dim{(\calG)}$.
A state $\psi$ is said to be stabilized by $\calG$ if $P\psi=\psi$ for all $P\in \calG$.
States stabilized by $\calG$  span a ``codespace" of dimension $2^{n-r}$. 
A projector onto a codespace has the form
\begin{equation}
\label{PiG}
\Pi_\calG=2^{-r} \sum_{P\in \calG} P.
\end{equation}

A pure $n$-qubit state $\psi$ is a stabilizer state iff
$|\psi\rangle=U |0^{ n}\rangle$ for some Clifford unitary 
$U$.  Any stabilizer state $\psi$ is uniquely defined (up to the overall phase)
by a stabilizer group $\calG\subseteq \calP_n$ of dimension $n$ such that 
$\psi$ is the only state stabilized by $\calG$.
Let $\calS_n$ be the set of all $n$-qubit stabilizer states. This set is known to be a $2$-design~\cite{Dankert2009}, that is,
\begin{equation}
\label{2design}
|\calS_n|^{-1} \sum_{\psi\in \calS_n} |\psi\rangle\langle \psi|^{\otimes 2}
=\int d\mu(\phi) |\phi\rangle \langle \phi|^{\otimes 2},
\end{equation}
where the integral  is with respect to the Haar measure
on the set of all normalized $n$-qubit states $\phi$.

Throughout the paper we assume that stabilizer states are represented in a certain standard form
defined in Appendix~B. In this representation, three basic  tasks can be performed efficiently. First,
one can compute 
the inner product between  stabilizer states~\cite{aaronson2004improved,Garcia2014geometry,BSS15}.
More precisely, consider stabilizer states 
$\psi,\phi\in \calS_n$. Then
$\langle \psi|\phi\rangle =b 2^{-p/2} e^{i\pi m/4}$
for some $b=0,1$, integer $p\in [0,n]$ and $m\in \ZZ_8$ that can be computed in time $O(n^3)$,
see Ref.~\cite{BSS15}.
Pseudocode for computing the inner product $\langle \psi|\phi\rangle$
can be found in Appendix~C. Secondly, 
a projection of any stabilizer state onto the codespace
of any stabilizer code is a  stabilizer state which is easy to compute.
More precisely,  suppose $\calG\subseteq \calP_n$ is a stabilizer group
and $\varphi\in \calS_n$.
Then  $\Pi_\calG |\varphi\rangle = b 2^{-p/2} |\phi\rangle$
for some $b=0,1$, some integer $p\ge 0$, and
stabilizer state $\phi\in \calS_n$. One can compute
$b,p,\phi$ in time $O(rn^2)$ as explained in Appendix~E. Recall that $r=\dim{(\calG)}$.
Finally, one can generate a random stabilizer state drawn
from the uniform distribution on $\calS_n$ in time $O(n^2)$,
see Appendix~D.

\section{Classical simulation algorithms}
\label{sec:algorithms}

First consider the task of approximating
the output probability $P_{out}(x)$. 
The algorithm described below consists of two stages with runtimes 
\[
\tau_1=O\left( (w+t)(c+t)+(n+t)^3 \right) 
\]
and
\[
\tau_2=O( 2^{\beta t} t^3 \epsilon^{-2}\log(p_f^{-1})).
\]
The first stage computes  a stabilizer group $\calG\subseteq \calP_t$
and an integer $u$  such that 
\begin{equation}
\label{stage1a}
P_{out}(x)=2^{-u} \langle A^{\otimes t} |\Pi_\calG |A^{\otimes t}\rangle.
\end{equation} 
We begin by  replacing each $T$-gate in the original circuit $U$ 
by  the well-known gadget~\cite{Zhou2000} shown in Fig.~\ref{fig:Tgate}.
The gadget  implements the $T$-gate by performing  Clifford gates CNOT, $S$, and a $0,1$-measurement.
Each measurement outcome appears with the  probability $1/2$.
The gate $S$ is applied only if the outcome is '1'.
The gadget also consumes one copy of the magic state $|A\rangle$
which is destroyed in the process.

\begin{figure}[htbp]
\includegraphics[width=8cm]{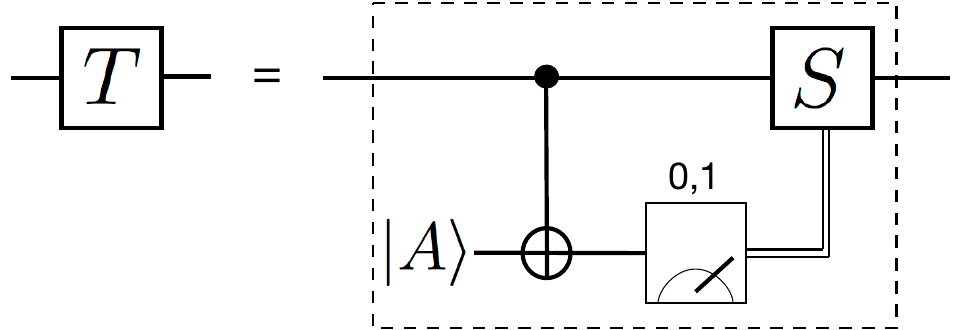}
\caption{The $T$-gate gadget.
The Clifford gate $S$ is classically controlled by the measurement outcome.
Both outcomes appear with probability $1/2$.
}
\label{fig:Tgate}
\end{figure}

Suppose we postselect the outcome '0' in each gadget, i.e. 
replace each measurement by a projector $|0\rangle\langle0|$.
This removes the classically controlled $S$-gates
such that each gadget adds a single CNOT to the original circuit $U$. 
Let $V$ be the modified version of $U$. By definition, $V$ acts on $n+t$ qubits and contains
$c+t$ Clifford gates. Let us agree that the $t$ ancillary qubits initialized in the magic state
are appended at the end of $n$ computational qubits such that the
circuit $V$ acts on the initial state $|0^{ n} A^{\otimes t}\rangle$. 
Combining the final measurement projector
$\Pi(x)=|x\rangle\langle x|_{Q_{out}} \otimes I_{else}$ with the projectors $|0\rangle\langle 0|$ acting on the ancillary
qubits  gives a projector 
\[
\Pi=\Pi(x)\otimes  |0^t\rangle\langle 0^t|
\]
acting on $n+t$ qubits 
such that 
\begin{equation}
\label{stage1b}
P_{out}(x)=2^t \langle 0^{n} A^{\otimes t} | V^\dag  \Pi V |0^{ n} A^{\otimes t}\rangle.
\end{equation}
Here we noted that the postselection probability is $2^{-t}$. 
Obviously, $\Pi=\Pi_\calW$ for a stabilizer group
$\calW\subseteq \calP_{n+t}$ of dimension $w+t$.
Namely, let $q(j)$ be the $j$-th  qubit of $Q_{out}$. 
Generators of $\calW$ are $R_j=(-1)^{x_j} Z_{q(j)}$
for $j=1,\ldots,w$ and 
$R_{w+j}=Z_{n+j}$ for $j=1,\ldots,t$. 
Since the conjugation by $V$ maps Pauli operators to Pauli operators,
we get 
$V^\dag \Pi_{\calW} V = \Pi_\calV$, where 
$\calV$ is a stabilizer group of dimension $w+t$
generated by 
$R_j'=V^\dag R_j V\in \calP_{n+t}$ with $j=1,\ldots,w+t$. Assuming that the action of a single Clifford gate
on a Pauli operator can be computed in time $O(1)$,
one can compute each generator
$R_j'$ in time $O(c+t)$. Accordingly, $\calV$ 
can be computed in  time $O((w+t)(c+t))$. 

Let $\calV_0$ be the subgroup of $\calV$ that includes all
Pauli operators which act as $I$ or $Z$ on each of the first $n$ qubits.
Let $v=\dim{(\calV_0)}$.  
A generating set $Q_1,\ldots,Q_v\in \calV_0$ can be computed in time 
$O(n(w+t) +(w+t)^3)=O((n+t)^3)$
using standard linear algebra. 
We get
\begin{equation}
\label{vpiv}
\langle 0^{n} |V^\dag \Pi V |0^{ n}\rangle=\langle 0^{n} | \Pi_\calV |0^{ n}\rangle=
2^{-w-t+v } \langle 0^{n} | \Pi_{\calV_0} |0^{n}\rangle.
\end{equation}
since $\langle 0^{ n}|P|0^{ n}\rangle=0$  $\; \forall \; P\in \calV\setminus \calV_0$.
Define $t$-qubit Pauli operators 
$G_i=\langle 0^{ n}|Q_i|0^{ n}\rangle$, $i=1,\ldots,v$.
These operators pairwise commute since $\calV_0$ is abelian 
and $Q_i$ commute with each other on the first $n$ qubits. 
If $-I\in \langle G_1,\ldots,G_v\rangle$ then there exists $Q\in \calV_0$ with $-I=\langle 0^{ n}|Q|0^{ n}\rangle$ and therefore
\[
\Pi_\calV |0^{ n}\rangle=\Pi_\calV Q|0^{ n}\rangle=-\Pi_\calV |0^{ n}\rangle=0
\]
in which case $P_{out}(x)=0$ and we are done.  Let us now consider the case $-I\notin \langle G_1,\ldots,G_v\rangle$. In this case let $\calG\subseteq \calP_t$ be the stabilizer group generated by $G_1,\ldots,G_v$
and $r=\dim{(\calG)}$.  One can check the condition $-I\notin \langle G_1,\ldots,G_v\rangle$ and compute $r$ in time $O(t^3)$.  Without loss of generality,
$\calG=\langle G_1,\ldots,G_r\rangle$. 
Noting that $\calV_0$ must contain $2^{v-r}$ elements acting trivially
on the last $t$ qubits yields
$\langle 0^{ n} |\Pi_{\calV_0} |0^{ n}\rangle = \Pi_\calG$.
This proves  Eq.~(\ref{stage1a}) with $u=w-v$ and the stabilizer group $\calG$  defined above.
Combining all the steps needed to compute $\calG$ gives the promised runtime
$\tau_1=O((w+t)(c+t)+(n+t)^3)$.

The second stage of the algorithm computes the expectation value in Eq.~(\ref{stage1a})
by decomposing $|A^{\otimes t}\rangle$ into a linear combination 
of stabilizer states. 
Suppose 
\begin{equation}
\label{Aexact}
|A^{\otimes t}\rangle = \sum_{a=1}^\chi y_a |\varphi_a\rangle
\end{equation}
for some stabilizer states $\varphi_a\in \calS_t$ and some coefficients $y_a$. 
For each $a=1,\ldots,\chi$ compute $b_a\in \{0,1\}$, an integer $p_a\ge 0$ and a stabilizer
state $\phi_a\in \calS_t$ such that 
\[
\Pi_\calG |\varphi_a\rangle=b_a 2^{-p_a/2}  |\phi_a\rangle.
\]
see Appendix~E for details. As stated above, 
this computation takes time $O(\chi t^3)$. 
Introducing new coefficients $z_a=2^{-(u+p_a)/2}y_a b_a$ 
and using Eqs.~(\ref{stage1a},\ref{Aexact}) one gets 
\begin{equation}
\label{EQ4}
P_{out}(x)=\|\psi\|^2, \quad |\psi\rangle =\sum_{a=1}^\chi z_a |\phi_a\rangle,
\quad \phi_a\in \calS_{t}.
\end{equation}
Here $\phi_a$
are $t$-qubit stabilizer states. Below we describe a randomized algorithm that takes as input a $t$-qubit state $\psi$, a target error parameter $\epsilon>0$ and a failure probability $p_{f}$. The algorithm computes a real number $\xi$ which, with probability at least $1-p_{f}$, approximates the norm of $\psi$ with relative error $\epsilon$. The running time of the algorithm is $O(\chi t^3 \epsilon^{-2}\log(p_f^{-1}))$. The key idea is to approximate $\|\psi\|^2$ by computing inner products
between $\psi$ and randomly chosen stabilizer states.  

We shall first consider the special case where the failure probability is $1/4$; at the end we describe how to reduce it to a given value $p_f^{-1}$. Let $\theta\in \calS_t$ be a random stabilizer state drawn from the uniform distribution.
Define expectation values 
\[
M_2\equiv \EE_\theta |\langle \theta|\psi\rangle|^2 \quad \mbox{and} \quad
M_4\equiv \EE_\theta |\langle \theta|\psi\rangle|^4.
\]
Using Eq.~(\ref{2design}) one can compute $M_2$ and $M_4$ by 
pretending that $\theta$ is drawn from the Haar measure.  Standard formulas
for the  integrals over the unit sphere yield
\begin{equation}
\label{M2M4}
M_2=\frac{\|\psi\|^2}{d} \quad \mbox{and} \quad M_4=\frac{2 \|\psi\|^4}{d(d+1)}, 
\quad \mbox{where} \quad d\equiv 2^t.
\end{equation}
Suppose $\theta_1,\ldots,\theta_L\in \calS_t$ are random independent
stabilizer states. 
Define a random variable 
\begin{equation}
\label{xi_main}
\xi=\frac{d}L \sum_{i=1}^L |\langle \theta_i|\psi\rangle|^2.
\end{equation}
From Eq.~(\ref{M2M4}) one infers that the expected value of $\xi$ is $\bar{\xi}=\EE(\xi)=\| \psi \|^2$ and 
the standard deviation of $\xi$ is 
\[
\sigma=\sqrt{ d^2 L^{-1} (M_4-M_2^2)} = \sqrt{\frac{d-1}{d+1}} L^{-1/2} \|\phi\|^2.
\]
For large $t$ one has $\sigma \approx L^{-1/2} \|\psi\|^2$. By the Chebyshev inequality, $\prob{|\xi-\bar{\xi}|\ge 2\sigma}\le \frac{1}{4}$.
Thus 
\begin{equation}
(1-\epsilon)\|\psi\|^2 \le \xi \le (1+\epsilon)\|\psi\|^2
\label{eq:errorbound}
\end{equation}
with probability at least $3/4$ provided that 
$L=4\epsilon^{-2}$. 

Now let us discuss how to reduce the failure probability (from $1/4$) so that it is below a given value $p_f^{-1}$. To achieve this, we compute independent estimates $\xi_1,\xi_2,\ldots \xi_J$ using the above procedure and output the median $\xi_{\mathrm{med}}$ of these values. It is a simple fact that this procedure reduces the failure probability to below $p_f^{-1}$ using only $J=O(\log(p_f^{-1}))$ estimates (see Lemma 6.1 of Ref.~\cite{jerrum1986random}). With this choice, the probability that Eq.~\eqref{eq:errorbound}  holds with $\xi$ replaced by $\xi_{\mathrm{med}}$ is at least $1-p_f$.

The inner product
$\langle \theta_i|\psi\rangle=\sum_{a=1}^\chi z_a \langle \theta_i|\phi_a\rangle$
in Eq.~(\ref{xi_main})
can be computed in time $O(\chi t^3)$ since $\theta_i$ and $\phi_a$
are stabilizer states of $t$ qubits. It follows that $P_{out}(x)=\|\psi\|^2$ can be approximated in time $O(\chi t^3 \epsilon^{-2}\log(p_f^{-1}))$, as promised. 

Since the runtime grows linearly with $\chi$, we would like 
to choose  a stabilizer decomposition in Eq.~(\ref{Aexact}) with 
a small rank $\chi$.
 Clearly, the optimal choice is $\chi=\chi_t$, where
$\chi_t\equiv \chi_t(0)$ is the stabilizer rank defined in the introduction.
Unfortunately, the exact value of $\chi_t$ is unknown.
Using the identity 
\begin{equation}
\label{AA}
|A^{\otimes 2}\rangle =\frac12 (|00\rangle + i|11\rangle) + \frac{e^{i\pi/4}}2 (|01\rangle+|10\rangle)
\end{equation}
one can see that $A^{\otimes 2}$ is a linear combination of two stabilizer states, that is,
$\chi_2=2$. By dividing $t$ qubits into $t/2$ pairs 
and applying the decomposition Eq.~(\ref{AA}) to each pair 
one gets $\chi_t\le 2^{t/2}$.  The results of~\cite{BSS15} 
give a slightly better bound $\chi_t\le 2^{\beta t}$ with $\beta\approx 0.47$.
This completes the analysis of the first algorithm.

{\em Remark 1:}
If $\calG$ has a small dimension, namely,
$r<\beta$, 
it can be  easier to compute $P_{out}(x)$ directly from 
Eqs.~(\ref{PiG},\ref{stage1a})
which yield $P_{out}(x)\sim \sum_{P\in \calG} \langle A^{\otimes t}|P|A^{\otimes t}\rangle$.
Clearly, each term in the sum can be computed in time $O(t)$,
so the overall runtime becomes $O(t|\calG|)=O(t 2^r)$.

{\em Remark~2:}
An alternative strategy to estimate the
expectation value in Eq.~(\ref{stage1a})
is to compute the inner products
\[
\langle \Pi_\calG \theta_i | A^{\otimes t}\rangle =\sum_{a=1}^\chi y_a \langle  \Pi_\calG\theta_i |\varphi_a\rangle
\]
for $i=1,\ldots,L$. Here $\varphi_a$ are the stabilizer states defined in Eq.~(\ref{Aexact})
and $\theta_i$ are random stabilizer states.
The same arguments as above show that 
\[
\langle A^{\otimes t} |\Pi_\calG|A^{\otimes t}\rangle =\| \Pi_\calG A^{\otimes t} \|^2
\approx \frac{d}{L} \sum_{i=1}^L |\langle \Pi_\calG \theta_i | A^{\otimes t}\rangle|^2.
\]
This  may be beneficial in the regime $L\ll \chi$ since one
has to compute the action of $\Pi_\calG$ only $L$ times rather than
$\chi$ times. 

Let us now describe the  algorithm that allows one to sample
$x$ from the distribution $P_{out}$ with statistical error $\epsilon$.  As before, we replace each $T$-gate in the original circuit $U$
by the gadget shown on Fig.~\ref{fig:Tgate},
prepare all magic states $|A\rangle$ at the very first time step,
and permute the qubits such that the initial state is $|0^n A^{\otimes t}\rangle$.
Let $y_j\in \{0,1\}$ be the outcome of the measurement performed
in the $j$-th gadget and $y=(y_1,\ldots,y_t)$.
Let $V_y$ be the Clifford circuit on $n+t$ qubits corresponding to measurement outcomes $y$. Each gadget with $y_j=0$ contributes a CNOT gate to $V_y$,
whereas each gadget with $y_j=1$ contributes a CNOT and the $S$-gate to $V_y$.
Thus $V_y$ contains $c+t+|y|$ gates.
A composition of all gadgets and Clifford gates of $U$ implements a
 trace preserving completely positive (TPCP) map
\[
\Phi(\rho)=\sum_y  (I_n \otimes |y\rangle\langle y|) V_y \rho V_y^\dag  (I_n \otimes |y\rangle\langle y|).
\]
Here $I_n$ is the $n$-qubit identity operator
and the sum runs over all $t$-bit strings $y$.
Suppose first that $\Phi$ is applied to a state
$\rho_{in}=|0^n\rangle\langle 0^n|\otimes |A\rangle\langle A|^{\otimes t}$.
Then the final state of the $n$  computational qubits is $U|0^n\rangle$
regardless of $y$ and each $y$ appears with
probability $2^{-t}$. Thus 
\begin{equation}
\label{PHIa}
\Phi(\rho_{in})=U|0^n\rangle\langle 0^n|U^\dag \otimes \frac{I}{2^t}.
\end{equation}
Next suppose that $\Phi$ is applied to a state
$\tilde{\rho}_{in}=|0^n\rangle\langle 0^n|\otimes |\psi\rangle\langle \psi|$,
where $\psi$ is a  linear combination of $\chi$ stabilizer states
$\varphi_1,\ldots,\varphi_\chi\in \calS_t$ that 
approximates $A^{\otimes t}$ with a small error:
\begin{equation}
\label{Aprox}
|\psi\rangle = \sum_{a=1}^\chi z_a |\varphi_a\rangle,
\quad  \quad
|\langle A^{\otimes t} | \psi\rangle|^2 \ge 1-\epsilon^2/25.
\end{equation}
Here $z_a$ are some coefficients and 
we assume $\psi$ has unit norm. The error $\epsilon^2/25$ is sufficient to ensure that the output distribution of the overall simulation algorithm is $\epsilon$-close to  $P_{out}$.
From Eq.~(\ref{Aprox}) one gets
\begin{equation}
\label{error1}
\|\rho_{in}-\tilde{\rho}_{in}\|_1 = \| \, |A\rangle\langle A|^{\otimes t} - |\psi\rangle\langle \psi| \, \|_1  \le \frac{2}{5}\epsilon.
\end{equation}
By definition of $\Phi$, 
\begin{equation}
\label{PHIb}
\Phi(\tilde{\rho}_{in})=\sum_y p_y\,  |\phi_y\rangle\langle \phi_y| \otimes |y\rangle\langle y|,
\end{equation}
where
\begin{equation}
\label{p_y}
p_y=\langle 0^n \otimes \psi | V_y^\dag (I_n \otimes |y\rangle\langle y| ) V_y |0^n \otimes  \psi \rangle
\end{equation}
and $\phi_y$ are normalized $t$-qubit states defined by 
\begin{equation}
\label{phi_y}
|\phi_y\rangle=p_y^{-1/2} \langle y| V_y|0^n \otimes  \psi\rangle.
\end{equation}
Clearly, $p$ is a normalized probability distribution on the set of $t$-bit strings.
The state $\phi_y$ is defined only for $p_y>0$.
Combining Eqs.~(\ref{PHIa},\ref{error1},\ref{PHIb}) and tracing out the last $t$ qubits 
of $\Phi(\rho_{in})$ and $\Phi(\tilde{\rho}_{in})$ 
one infers that 
\begin{equation}
\label{error2}
\|\,  U|0^n\rangle\langle 0^n|U^\dag - \sum_y p_y |\phi_y\rangle\langle \phi_y| \, \|_1 \le \frac{2}{5}\epsilon.
\end{equation}
Here we noted that TPCP maps do not increase the trace distance. Combining Eqs.~(\ref{PHIa},\ref{error1},\ref{PHIb}) and tracing out the first $n$ qubits of $\Phi(\rho_{in})$ and $\Phi(\tilde{\rho}_{in})$ shows that the distribution $p$ satisfies $\|p-u\|_1\leq \frac{2}{5}\epsilon$, where $u$ is the uniform distribution on the set of $t$-bit strings. Using this fact and Eq.~(\ref{error2}) we arrive at
\begin{equation}
\label{error3}
\|\,  U|0^n\rangle\langle 0^n|U^\dag -  \frac{1}{2^t}\sum_y|\phi_y\rangle\langle \phi_y| \, \|_1 \le \frac{4}{5}\epsilon.
\end{equation}

For each $t$-bit string $y$ define a probability distribution ${P}^{y}_{out}(x)=\langle \phi_y|\Pi(x)|\phi_y\rangle$. 
Below we give an algorithm which takes as input $y$ and $\epsilon$ and produces a sample from a distribution $\tilde{P}^{y}_{out}$ which satisfies 
\begin{equation}
\label{Rdist}
\|{P}^{y}_{out}(x)-\tilde{P}^{y}_{out}(x)\|_1\leq \epsilon/5
\end{equation}
Our algorithm to approximately sample from $P_{out}$ has two steps. We first generate a random uniformly distributed  $t$-bit string $y$ and then we sample $x$ from $\tilde{P}^{y}_{out}$. From Eqs.~(\ref{error3},\ref{Rdist}) we see that the distribution over outputs $x\in \{0,1\}^w$ produced by this algorithm approximates $P_{out}$ within error $\epsilon$ in the trace norm.  

We are now ready to describe how to sample from $\tilde{P}_{out}^y$ satisfying Eq.~(\ref{Rdist}). We first describe how to compute an approximation to $P_{out}^y(x)$ with relative error $\delta$.  Note that
\begin{equation}
\label{tildePa}
P^{y}_{out}(x) = \frac{ \langle 0^n \otimes \psi| V_y^\dag (\Pi(x) \otimes |y\rangle\langle y|) V_y |0^n \otimes \psi\rangle}
{ \langle 0^n\otimes  \psi| V_y^\dag (I_n \otimes |y\rangle\langle y|) V_y |0^n\otimes \psi\rangle }.
\end{equation}
Here we used Eqs.~(\ref{p_y},\ref{phi_y}).
Repeating the same arguments as in the derivation of Eq.~(\ref{stage1a}) one gets
\begin{equation}
\label{tildePb}
P^y_{out}(x)=\frac{ 2^{-u} \langle \psi| \Pi_\calG|\psi\rangle }
{2^{-v} \langle \psi| \Pi_\calH|\psi\rangle }
\end{equation}
for some stabilizer groups $\calG,\calH\subseteq \calP_t$
and integers $u,v$ that can be computed in time
$\tau_1=O((w+t)(c+t) + (n+t)^3)$. We already know a randomized algorithm which computes $\langle \psi| \Pi_\calG|\psi\rangle$
and $\langle \psi| \Pi_\calH|\psi\rangle$
with a relative error $\delta$ in time $\tau_2=O(\chi t^3 \delta^{-2}\log(p_f^{-1}))$. Recall that $p_f$ is the probability that the algorithm does not achieve the desired approximation. Thus we can compute $P^y_{out}(x)$ 
with a relative error $2\delta$ in time $\tau_1+\tau_2$.

Now consider the task of sampling from $P^y_{out}$. Assume for simplicity that $Q_{out}=\{1,2,\ldots,w\}$.
For each $j=1,\ldots,w-1$ define conditional probabilities
\begin{equation}
\label{conditional}
P^y_{out}(z | x_1,\ldots,x_{j-1}) =\frac{P^y_{out}(x_1,\ldots,x_{j-1},z)}{P^y_{out}(x_1,\ldots,x_{j-1})},
\end{equation}
where $z\in\{0,1\}$. Suppose the bits $x_1,\ldots,x_{j-1}$  have already been sampled (initially $j=1$). Then the next bit $x_j$ can be sampled by tossing a coin with bias $P^y_{out}(0 | x_1,\ldots,x_{j-1})$. Things are complicated by the fact that we cannot exactly compute this conditional probability.  We use the same simulation strategy except that at each step the conditional probability $P^y_{out}(0 | x_1,\ldots,x_{j-1})$ is replaced by an approximation $q_j$. Here we require that with probability at least $1-p_f$,  both $q_j$ and $1-q_j$ approximate the conditional probabilities $P^y_{out}(0| x_1,\ldots,x_{j-1})$ and $P^y_{out}(1| x_1,\ldots,x_{j-1})$ respectively with relative error $O(\delta)$. Such an approximation $q_j$ can be computed in time $O(\tau_1+\tau_2)$ using the procedure described above for approximating the probabilities on the right-hand side of Eq.~(\ref{conditional}). Indeed, 
first compute $a,b$ which, with probability at least $1-p_f$, approximate $P^y_{out}(0| x_1,\ldots,x_{j-1})$ and
 $P^y_{out}(1| x_1,\ldots,x_{j-1})$ respectively with relative error $\delta$. If $a\leq b$ then we set $q_j=a$ while if $b<a$ then we set $q_j=1-b$.
 
 We now analyze the resulting simulation algorithm and show that we can ensure Eq.~(\ref{Rdist}) by choosing approximation error $\delta=O(\epsilon w^{-1})$ and failure probability $p_f=O(\epsilon w^{-1})$. Let us first suppose that all probabilities $q_j$ computed by the algorithm achieve the desired approximation $\delta$ (i.e., no failures occur). Conditioned on this event we see that the output distribution produced by the algorithm approximates $P_{out}^y(x)$ with relative error $O(\delta w)$. This conditional probability distribution can therefore be made $\epsilon/10$-close (say) to $P_{out}^y$ by choosing $\delta=O(\epsilon w^{-1})$. It remains to show that by choosing $p_f=O(\epsilon w^{-1})$ we can ensure that the output distribution $\tilde{P}^y_{out}$ of the simulation algorithm is $\epsilon/10$-close to the distribution conditioned on no failures. This follows because the algorithm computes $O(w)$ probabilities $\{q_j\}$ in total and choosing $p_f=O(\epsilon w^{-1})$ we can ensure that all of them are computed to within the desired approximation error $\delta$, with probability at least $1-\epsilon/20$. With this choice we have $\tilde{P}^y_{out}=(1-\epsilon/20)P_{A}+\epsilon/20 P_{B}$ where $P_{A}$ is the distribution conditioned on no failures, and thus  $\|\tilde{P}^y_{out}-P_{A}\|_1\leq \epsilon/10$ as claimed.

The overall running time of this algorithm is $\tau_1'+\tau_2'$, where 
$\tau_1'=O(w\tau_1) =O(w(w+t)(c+t) + w(n+t)^3)$
and $\tau_2'=O(w\tau_2) = O(\chi w^3  t^3\epsilon^{-2}\log(w\epsilon^{-1}))$.

{\em Remark:} This algorithm can be modified slightly to handle certain Clifford+$T$ circuits which use measurement and classical control. To see how, recall that in the $T$-gate gadget from Fig.~\ref{fig:Tgate}, a single qubit is measured in the computational basis (yielding both outcomes with equal probability) and a Clifford operation is classically controlled on the measurement outcome. In our simulation algorithm the measurement is replaced by a uniformly chosen postselection bit $y_j$. Exactly the same strategy can be used for other simple gadgets which involve measurement and classical control.  For example, the Toffoli gate can be implemented as a Clifford+$T$ circuit with only four $T$-gates if we allow two ancillas, measurement, and classical control \cite{Jones2013}(otherwise it requires seven $T$-gates \cite{amy2013meet,GKMR2014}). Fortunately it is possible to use the less costly circuit with four $T$-gates in the above simulation algorithm by including one additional postselection bit per Toffoli gate.

\section{Approximating magic states}
\label{sec:magic}

In this section we  show how to  compute a decomposition Eq.~(\ref{Aprox}) with $\chi=O(2^{\gamma t} \epsilon^{-2})$, where $\gamma$ satisfies Eq.~(\ref{T3}).
Define a state
\[
|H\rangle=\cos(\pi/8)|0\rangle+\sin(\pi/8)|1\rangle.
\]
We note that the magic state $|A\rangle$ is equivalent to $|H\rangle$
modulo Clifford gates and a global phase,  $|A\rangle=e^{i\pi/8} H S^\dagger |H\rangle$. Thus
it suffices to construct a state 
\begin{equation}
\label{AproxT}
|\psi\rangle = \sum_{a=1}^\chi z_a |\varphi_a\rangle, \quad \varphi_1,\ldots,\varphi_\chi\in \calS_t
\end{equation}
such that $\|\psi\|=1$, 
\begin{equation}
\label{AproxT1}
|\langle H^{\otimes t} | \psi\rangle|^2 \ge 1-\delta \quad \mbox{and} \quad 
\chi=O(2^{\gamma t} \delta^{-1})
\end{equation}
for all sufficiently small $\delta>0$.

Our starting point is the identity 
\begin{equation}
\label{eq:Texact}
|H^{\otimes t}\rangle=\frac{1}{(2\nu)^t}\sum_{x\in \mathbb{F}_2^t} |\tilde{x}_1\otimes \tilde{x}_2\otimes\ldots \otimes \tilde{x}_t\rangle
\end{equation}
where $|\tilde{0}\rangle\equiv |0\rangle$, $|\tilde{1}\rangle\equiv H|0\rangle=2^{-1/2}(|0\rangle+|1\rangle)$, and 
\[
\nu \equiv \cos(\pi/8).
\] 
The right-hand side of Eq.~(\ref{eq:Texact}) is a uniform superposition of $2^t$ non-orthogonal stabilizer states labeled by elements of the vector space $\mathbb{F}_2^t$. We shall construct an approximation
$\psi$ which is a uniform superposition
of states $|\tilde{x}_1\otimes \tilde{x}_2\otimes\ldots \otimes \tilde{x}_t\rangle$
 over a linear subspace of $\FF_2^t$.

Let $L(t,k)$ be the set of all $k$-dimensional linear subspaces $\calL\subseteq \FF_2^t$. We will fix $k$ below.  For each $\calL\in L(t,k)$ define a  state 
\begin{equation}
\label{eq:Lsubspace}
|\calL\rangle=\frac{1}{\sqrt{2^k Z(\calL)}} \sum_{x\in \calL} |\tilde{x}_1\otimes \tilde{x}_2 \otimes \cdots \otimes \tilde{x}_t\rangle 
\end{equation}
where 
\begin{equation}
\label{eq3}
Z(\calL)\equiv \sum_{x\in \calL} 2^{-|x|/2}.
\end{equation}
Using the identity $\langle \tilde{a}|\tilde{b}\rangle=2^{-|a\oplus b|/2}$,
where $a,b\in \{0,1\}$, 
and the fact that $\calL$ is a linear subspace one can easily check that $|\calL\rangle$
is a normalized state, $\langle \calL|\calL\rangle=1$.
We take our approximation $\psi$ from Eq.~(\ref{AproxT}) to be Eq.~(\ref{eq:Lsubspace}) for a suitably chosen 
subspace $\calL^\star\in L(t,k)$, which gives an approximate decomposition
of $H^{\otimes t}$ using $\chi=2^k$ stabilizer states. 
How small can we hope to make $k$? Using the fact that $\langle H|\tilde{0}\rangle=\langle H|\tilde{1}\rangle=\nu$ we see that
\begin{equation}
\label{eq:fidT}
|\langle H^{\otimes t}|\calL\rangle|^2=\frac{2^k \nu^{2t}}{Z(\calL)}
\end{equation}
From this we immediately get a lower bound on $k$. Indeed, since $Z(\calL)\geq 1$ we will need $2^k\geq  \nu^{-2t}(1-\delta)$ to achieve the desired approximation. 
Below we describe a randomized algorithm which outputs a subspace $\calL^{\star}$ with $2^k=O(\delta^{-1}\nu^{-2t})$. Thus for constant $\delta$ we achieve the best possible scaling of $k$ with $t$. We will use the following fact about random subspaces of $\mathbb{F}_2^t$.

\begin{lemma}
Let $\calL\in L(t,k)$ be chosen uniformly at random. Then 
\begin{equation}
\label{eq4}
\EE(Z(\calL)) \le  1+2^k\nu^{2t}.
\end{equation}
\label{lem:expectation}
\end{lemma}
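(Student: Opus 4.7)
The plan is to compute $\EE(Z(\mathcal{L}))$ essentially exactly by linearity, reducing to the probability that a fixed vector lies in a uniformly random $k$-dimensional subspace, and then to recognize a trigonometric identity that lets the answer be bounded by the target expression.

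First I would write
\[
\EE(Z(\calL)) \;=\; \sum_{x\in \FF_2^t} 2^{-|x|/2}\,\prob{x\in \calL}.
\]
The zero vector lies in every subspace, contributing $1$. For $x\neq 0$, by the transitive action of $GL(t,\FF_2)$ on nonzero vectors, $\prob{x\in \calL}$ is the same constant, equal to (number of nonzero vectors in a $k$-dimensional subspace)/(number of nonzero vectors in $\FF_2^t$), namely $(2^k-1)/(2^t-1)$. So
\[
\EE(Z(\calL)) \;=\; 1 + \frac{2^k-1}{2^t-1}\sum_{x\neq 0} 2^{-|x|/2}.
\]

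Next I would evaluate the total weight sum. Since $|x|$ is additive across coordinates, the sum factorizes:
\[
\sum_{x\in \FF_2^t} 2^{-|x|/2} \;=\; \bigl(1+2^{-1/2}\bigr)^{t}.
\]
The key trigonometric identity is $2\nu^2 = 2\cos^2(\pi/8) = 1+\cos(\pi/4) = 1+2^{-1/2}$, so this equals $(2\nu^2)^t = 2^t\nu^{2t}$. Subtracting the $x=0$ term gives $\sum_{x\neq 0}2^{-|x|/2} = 2^t\nu^{2t}-1$.

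Finally I would close the inequality. Plugging in,
\[
\EE(Z(\calL)) \;=\; 1 + \frac{2^k-1}{2^t-1}\bigl(2^t\nu^{2t}-1\bigr).
\]
Since $k\le t$ one checks easily that $(2^k-1)/(2^t-1)\le 2^{k-t}$ (cross-multiplying reduces this to $2^k\le 2^t$). Therefore
\[
\EE(Z(\calL)) \;\le\; 1 + 2^{k-t}\bigl(2^t\nu^{2t}-1\bigr) \;=\; 1 + 2^k\nu^{2t} - 2^{k-t} \;\le\; 1 + 2^k\nu^{2t},
\]
which is the claim. There is no real obstacle here: the only subtle point is the identification $2\cos^2(\pi/8) = 1+2^{-1/2}$, which is what makes the tensor-power weight sum match the target bound cleanly; everything else is symmetry plus a one-line inequality on rationals of the form $(2^k-1)/(2^t-1)$.
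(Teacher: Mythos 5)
Your proposal is correct and follows essentially the same route as the paper: linearity, the symmetry computation $\prob{x\in\calL}=(2^k-1)/(2^t-1)$ for $x\neq 0$, the factorized weight sum $\sum_x 2^{-|x|/2}=(1+2^{-1/2})^t=2^t\nu^{2t}$, and the final elementary inequality. You in fact spell out the identity $2\cos^2(\pi/8)=1+2^{-1/2}$ and the bound $(2^k-1)/(2^t-1)\le 2^{k-t}$ slightly more explicitly than the paper does, but the argument is the same.
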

\begin{proof}
By linearity, we have 
\begin{equation}
\label{eq5}
\EE(Z(\calL)) = 1+\sum_{x\in \FF_2^t \setminus 0}  2^{-|x|/2}\cdot \EE(\chi_\calL(x)),
\end{equation}
where $\chi_\calL(x)$ is the indicator function of $\calL$. The expectation value $\EE(\chi_\calL(x))$ with respect to $\calL$ for a fixed $x$
is $(2^k-1)/(2^t-1)$. Thus we arrive at
\begin{align*}
\label{eq6}
\EE(Z(\calL)) &= 1+\frac{(2^k-1)}{(2^t-1)} \sum_{x\in \FF_2^t \setminus 0}  2^{-|x|/2}\\
& = 1+\frac{(2^k-1)}{(2^t-1)} \left( 2^t\nu^{2t} -1\right)\\
&\leq 1+2^k\nu^{2t}.
\end{align*}
\end{proof}
As a corollary, there exists at least one $\calL\in L(t,k)$ such that 
$Z(\calL)\le 1+2^k\nu^{2t}$.
We now fix $k$ to be the unique positive integer satisfying
\begin{equation}
\label{eq:kstar}
4\geq  2^{k}\nu^{2t}\delta \geq 2.
\end{equation}
Consider a subspace $\calL\in L(t,k)$ chosen uniformly at random. Using Markov's inequality and Lemma \ref{lem:expectation} we get
\begin{align}
\mathrm{Pr}\left[\frac{Z(\calL)}{(1+2^{k}\nu^{2t} )(1+\delta/2)}\geq 1\right]& \leq \frac{\EE(Z(\calL))}{(1+2^{k}\nu^{2t} )\left(1+\delta/2\right)}\nonumber\\
& \leq 1-\frac{\delta}{2+\delta}\nonumber.
\end{align}
For a given $\calL\in L(t,k)$ we may compute $Z(\calL)$ in time $O(2^{k})$. By randomly choosing $O(1/\delta)$ subspaces $\calL$ we obtain one $\calL^{\star}$ satisfying 
\begin{equation}
\label{eq:Zstar}
Z(\calL^{\star})\leq (1+2^{k}\nu^{2t})(1+\delta/2)
\end{equation}
with constant probability. Plugging Eq.~(\ref{eq:Zstar}) into Eq.~(\ref{eq:fidT}) we see that
\begin{align*}
|\langle H^{\otimes t}|\calL^\star\rangle|^2 & \geq \frac{1}{\left(1+2^{-k}\nu^{-2t}\right)(1+\delta/2)} \nonumber\\ & \geq \frac{1}{(1+\delta/2)^2}\nonumber\\&\geq1-\delta,
\end{align*}
where in the second line we used Eq.~(\ref{eq:kstar}). The state $|\psi\rangle=|\calL^\star\rangle$ obtained in this way therefore satisfies Eq.~(\ref{AproxT1}) with 
\begin{equation}
\label{chi_upper_bound}
\chi=2^k\leq 4\nu^{-2t}\delta^{-1}=O(\nu^{-2t}\delta^{-1})=O(2^{\gamma t} \delta^{-1}).
\end{equation}
This algorithm has running time $O(\nu^{-2t} \delta^{-2})$, since we must check the condition Eq.~(\ref{eq:Zstar}) for each of the $O(\delta^{-1})$ randomly sampled elements of $L(t,k)$ (note that the time required to sample each element is $O(poly(t))$).

{\em Remark:} 
One may ask whether a stronger bound on $\chi$ can be obtained by 
truncating the expansion of $H^{\otimes t}$ in some other basis
of stabilizer states. For example, consider the standard $0,1$-basis
of $t$ qubits. The expansion of  $H^{\otimes t}$  in this basis
is concentrated on basis vectors $x\in \FF_2^t$ with Hamming weight
$|x|=(1-\nu^2)t \pm O(t^{1/2})$. The number of such basis vectors
scales as $\chi\sim 2^{ t H_2(\nu^2)}\approx 2^{0.6 t}$, where $H_2(p)$ is the binary Shannon
entropy function. Thus replacing the $\tilde{0},\tilde{1}$-basis
by the $0,1$-basis gives a significantly worse bound on $\chi$.

As noted above, taking $\delta$ to be a constant our construction has the best possible scaling  $\chi=O(\nu^{-2t})$ of any decomposition of the form Eq.~(\ref{eq:Lsubspace}). 
In fact, we  prove the following lower bound on the stabilizer rank of $H^{\otimes t}$.
\begin{lemma}
\label{lemma:lower}
Consider a state $|\psi\rangle=\sum_{a=1}^\chi z_a |\phi_a\rangle$, where
$\phi_a\in \calS_t$. Suppose $\|\psi\|=1$ and 
$|\langle \psi| H^{\otimes t}\rangle| \ge f$.
Then $\chi\ge \nu^{-2t}  f^2 \| z\|^{-2}$, 
where  $z=(z_1,\ldots,z_\chi)\in \CC^\chi$.
\end{lemma}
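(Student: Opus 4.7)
The plan is to reduce the statement to a bound on the single-stabilizer overlap $|\langle \phi | H^{\otimes t}\rangle|$ and then apply Cauchy--Schwarz. The key auxiliary fact I would invoke is:
\[
|\langle \phi | H^{\otimes t}\rangle|^2 \le \nu^{2t} \quad \text{for every } \phi \in \calS_t.
\tag{$\star$}
\]
This is the multiplicativity of the stabilizer fidelity for the magic state $|H\rangle^{\otimes t}$. For $t=1$ it is immediate by checking the six single-qubit stabilizer states $|0\rangle,|1\rangle,|\pm\rangle,|\pm i\rangle$ directly against $|H\rangle$, each overlap being $\nu^2$, $\sin^2(\pi/8)$, or $1/2$, all bounded by $\nu^2$. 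For general $t$ the inequality is nontrivial because entangled stabilizer states must also be considered; I would cite this as a known fact about $H$-type magic states (e.g.\ from \cite{BSS15}). This is the one real obstacle: the rest of the argument is cheap.

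Given $(\star)$, the main calculation is short. Let $v \in \CC^\chi$ be the vector with components $v_a = \langle \phi_a | H^{\otimes t}\rangle$. Expanding the overlap in the given decomposition,
\[
\langle \psi | H^{\otimes t}\rangle \;=\; \sum_{a=1}^\chi z_a^* \langle \phi_a | H^{\otimes t}\rangle \;=\; \langle z , v\rangle .
\]
Applying the Cauchy--Schwarz inequality to the right-hand side,
\[
f^2 \;\le\; |\langle \psi | H^{\otimes t}\rangle|^2 \;\le\; \|z\|^2\, \|v\|^2 .
\]
Then bound $\|v\|^2$ by maximizing entrywise and using $(\star)$:
\[
\|v\|^2 \;=\; \sum_{a=1}^\chi |\langle \phi_a | H^{\otimes t}\rangle|^2 \;\le\; \chi \cdot \max_a |\langle \phi_a | H^{\otimes t}\rangle|^2 \;\le\; \chi\, \nu^{2t}.
\]
Chaining the two inequalities gives $f^2 \le \chi\, \nu^{2t}\, \|z\|^2$, which rearranges to the claimed bound $\chi \ge \nu^{-2t} f^2 \|z\|^{-2}$.

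An essentially equivalent path I would keep in mind as a sanity check is the triangle--inequality route: $f \le \sum_a |z_a|\,|\langle \phi_a|H^{\otimes t}\rangle| \le \nu^t \sum_a |z_a| \le \nu^t \sqrt{\chi}\, \|z\|_2$, using the usual $\ell_1$--$\ell_2$ inequality $\sum_a |z_a| \le \sqrt{\chi}\,\|z\|_2$. It yields exactly the same bound and confirms there is no slack in the argument beyond what $(\star)$ contributes. So the entire content of the lemma beyond Cauchy--Schwarz is the multiplicativity statement $(\star)$; once that is in hand the proof is a one-line manipulation, and the result is tight in the sense that the construction earlier in Section~\ref{sec:magic} saturates $\chi = O(\nu^{-2t})$ with $\|z\|^2 = O(1)$ and $f^2 \to 1$.
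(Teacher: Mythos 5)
Your overall strategy coincides with the paper's: everything reduces to the single inequality $\max_{\phi\in\calS_t}|\langle \phi|H^{\otimes t}\rangle|\le\nu^t$, after which the triangle inequality (or Cauchy--Schwarz) together with $\sum_a|z_a|\le\sqrt{\chi}\,\|z\|$ gives the bound in one line. The problem is that you have left precisely that inequality unproven. You correctly flag it as ``the one real obstacle'' and then propose to cite it as a known multiplicativity fact, but the paper does not cite it --- it proves it, and that induction \emph{is} the proof of the lemma. A checkable reference is not obviously available here (the multiplicativity of stabilizer fidelity for $H$-type magic states was not an off-the-shelf result at the time, and \cite{BSS15} does not state it in the form you need), so as written your argument has a genuine gap at its only nontrivial step: the $t=1$ check plus ``cite for general $t$'' does not constitute a proof, exactly because, as you note, entangled stabilizer states must be handled.

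The gap is filled by a short induction that you could have run yourself. Let $F_t=\max_{\phi\in\calS_t}|\langle\phi|H^{\otimes t}\rangle|$; the claim is $F_t\le\nu F_{t-1}$, with $F_1=\nu$ as the base case. Take $\phi\in\calS_t$ achieving the maximum and consider measuring its first qubit in the computational basis. For a stabilizer state the outcome probability $P_0$ lies in $\{0,1,1/2\}$, and in each case the (normalized) post-measurement state on the remaining $t-1$ qubits is again a stabilizer state. If $P_0=1$ then $\phi=|0\rangle\otimes\psi$ and $F_t=\nu\,|\langle\psi|H^{\otimes(t-1)}\rangle|\le\nu F_{t-1}$; if $P_0=0$ the prefactor is $\sin(\pi/8)<\nu$; and if $P_0=1/2$ then $\phi=2^{-1/2}(|0\rangle\otimes\psi_0+|1\rangle\otimes\psi_1)$ with $\psi_0,\psi_1\in\calS_{t-1}$, and the triangle inequality gives $F_t\le 2^{-1/2}\bigl(\cos(\pi/8)+\sin(\pi/8)\bigr)F_{t-1}=\nu F_{t-1}$, using $\cos\theta+\sin\theta=\sqrt{2}\cos(\theta-\pi/4)$. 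The balanced case is where the entangled states live and where the trigonometric identity does the real work; without this case analysis your proof does not go through. Once $(\star)$ is established, your Cauchy--Schwarz chain and your sanity-check triangle-inequality route are both fine and match the paper's closing computation.
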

\begin{proof}
First, let us show that 
\begin{equation}
\label{max1}
F_t\equiv \max_{\phi\in \calS_t} |\langle \phi|H^{\otimes t}\rangle|=\nu^t.
\end{equation}
The lower bound $F_t\ge \nu^t$ is obvious since $\langle 0^{\otimes t}|H^{\otimes t}\rangle=\nu^t$.
We shall use induction in $t$ to show that $F_t\le \nu F_{t-1}$.
Consider some fixed $t$ and let 
$F_t=|\langle \phi|H^{\otimes t}\rangle|$ for some $\phi\in \calS_t$.
Suppose we measure the first qubit of $\phi$ in the $0,1$ basis.
Let $P_a$ be the probability of getting the outcome $a=0,1$. 
It is well-known that $P_a\in \{0,1,1/2\}$ for any stabilizer state $\phi$.
Consider three cases.

\noindent
{\em Case 1:} $P_0=1$. Then $|\phi\rangle=|0\rangle \otimes |\psi\rangle$
for some $\psi\in \calS_{t-1}$ and $F_t=\nu |\langle \psi| H^{\otimes (t-1)}\rangle| \le \nu F_{t-1}$.

\noindent
{\em Case 2:} $P_0=0$. Then $|\phi\rangle=|1\rangle \otimes |\psi\rangle$
for some $\psi\in \calS_{t-1}$ and $F_t=\sqrt{1-\nu^2} |\langle \psi| H^{\otimes (t-1)} \rangle| <\nu F_{t-1}$.

\noindent
{\em Case 3:} $P_0=1/2$. Then 
\[
|\phi\rangle = 2^{-1/2} \left( |0\rangle \otimes |\psi_0\rangle + |1\rangle \otimes |\psi_1\rangle \right)
\]
for some $\psi_0,\psi_1\in \calS_{t-1}$. By triangle inequality,
\[
F_t\le 2^{-1/2} (\nu+\sqrt{1-\nu^2}) F_{t-1} =\nu F_{t-1}.
\]
The base of induction $F_1=\nu$ is trivial. 
This proves Eq.~(\ref{max1}).
From Eq.~(\ref{max1}) one gets
\[
f\le |\langle \psi |H^{\otimes t}\rangle| \le \nu^t \sum_{a=1}^\chi |z_a| \le \nu^t \chi^{1/2} \|z\|.
\]
This is equivalent to the statement of the lemma.
\end{proof}
We conjecture that  any  approximate
stabilizer decomposition of $H^{\otimes t}$
that achieves a constant approximation error must use at least
$\Omega(\nu^{-2t})$ stabilizer states.

\section{Acknowledgments}
DG acknowledges funding provided by the Institute for Quantum Information and Matter, an NSF Physics Frontiers Center (NFS Grant PHY-1125565) with support of the Gordon and Betty Moore Foundation (GBMF-12500028). 
SB thanks Alexei Kitaev for helpful discussions and comments.

\section*{Appendix A: Quadratic forms}
\label{app:A}

The remaining sections  provide more
details on implementation of our algorithms.
 Appendix~A presents some basic facts 
about quadratic forms over finite fields
and describes a subroutine for computing
certain exponential sums.
The standard form of stabilizer states used in all our
algorithms is defined in Appendix~B.
Then we present algorithms for computing the inner product between stabilizer states
(Appendix~C), generating  a random uniformly distributed stabilizer state (Appendix~D),
and computing the action
of Pauli measurements on stabilizer states (Appendix~E).
The three algorithms have running time $O(n^3)$, $O(n^2)$,
and $O(n^2)$ respectively, where $n$ is the number of qubits.
We provide pseudocode for all algorithms and report timing analysis
for a MATLAB implementation. Appendix~F describes simulation of the hidden shift algorithm. 

Below we consider functions that map binary vectors
to integers modulo eight. We define a 
special class of such functions that are analogous to quadratic forms
over the real field. 
The definition of $\ZZ_8$-valued quadratic forms given below
was proposed to us by Kitaev~\cite{Kitaev2003}. Analogous definitions and computations using $\ZZ_4$-valued quadratic forms can be found in \cite{schmidt2009}. For a general theory of quadratic forms over a finite field see Ref.~\cite{Araujo2011}. 
Throughout the rest of the paper arithmetic operations $\pm$ are performed modulo eight
(unless stated otherwise),
whereas addition of binary vectors modulo two is denoted $\oplus$.
Elements of  $\FF_2^n$ are considered as binary row vectors.
A binary inner product between vectors $x,y\in \FF_2^n$ will be denoted
$(x,y)\equiv \sum_{i=1}^n x_i y_i {\pmod 2}$.
A set of binary matrices of size $a\times b$ is denoted $\FF_2^{a\times b}$.
A transpose of a matrix $M$ is denoted $M^T$.

Recall that a subset  $\calK\subseteq \FF_2^n$ is a 
called an affine space of dimension $k$ iff
$\calK=\calL(\calK)\oplus h$ for some 
$k$-dimensional linear subspace $\calL(\calK)\subseteq \FF_2^n$
and a shift vector $h\in \FF_2^n$.
Note that $\calK$ uniquely determines $\calL(\calK)$, namely, $\calL(\calK)=\{ x\oplus y\, : \, x,y\in \calK\}$. 
The shift vector $h$ however is not uniquely defined.
Obviously, $|\calK|=|\calL(\calK)|=2^k$.
\begin{dfn}
\label{dfn:qform}
Consider an affine space $\calK\subseteq \FF_2^n$.
A function $q\, : \, \calK \to \ZZ_8$ is called a quadratic form iff
there exists a function $J\, : \, \calL(\calK)\times \calL(\calK) \to \ZZ_8$ such that
\begin{equation}
\label{qform1}
q(x\oplus y\oplus z)+q(z) -q(x\oplus z)-q(y\oplus z)=J(x,y)
\end{equation}
for all $z\in \calK$ and for all $x,y\in \calL(\calK)$.
\end{dfn}
Informally, Eq.~(\ref{qform1}) demands that a discrete analogue of the second derivative
$\frac{\partial^2 q}{\partial x \partial y}$ evaluated at some point $z\in \calK$ does not depend on $z$,
as it is the case for quadratic forms over the real field. 
The next  lemma states properties of the function $J(x,y)$
that follow from Eq.~(\ref{qform1}). 
\begin{lemma}
\label{lemma:qform1}
The function $J(x,y)$ defined by Eq.~(\ref{qform1}) is a symmetric bilinear form that 
takes values $0,4 \pmod 8$. Namely,  
$J(x,y)=J(y,x)$, $J(0,y)=0$,  
$J(x'\oplus x'',y)=J(x',y)+J(x'',y)$,
and $J(x,y)={0,4 \pmod 8}$
for all $x,x',x'',y\in \calL(\calK)$.
\end{lemma}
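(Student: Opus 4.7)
The plan is to exploit the fact, built into Definition~\ref{dfn:qform}, that the right-hand side of Eq.~(\ref{qform1}) does not depend on the choice of base point $z \in \calK$. All four claimed properties will follow either by direct substitution into (\ref{qform1}) or by judiciously shifting the base point.

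Symmetry $J(x,y) = J(y,x)$ is immediate: interchanging $x$ and $y$ leaves each of the four terms on the right-hand side of (\ref{qform1}) invariant because $x \oplus y = y \oplus x$. The vanishing property $J(0,y) = 0$ is similarly routine: substituting $x = 0$ yields $q(y \oplus z) + q(z) - q(z) - q(y \oplus z) = 0$.

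The substantive step is bilinearity. Here I would compute $J(x'', y)$ using the original base point $z$, and compute $J(x', y)$ using the shifted base point $z' = x'' \oplus z$, which still lies in $\calK$ since $x'' \in \calL(\calK)$ and $\calK = \calL(\calK) \oplus h$. Adding the two resulting expressions, the terms $\pm q(x'' \oplus z)$ cancel, and the terms $q(x'' \oplus y \oplus z)$ and $-q(y \oplus x'' \oplus z)$ also cancel, leaving precisely the defining expression for $J(x' \oplus x'', y)$ evaluated at base point $z$. Bilinearity in the second argument then follows automatically from bilinearity in the first argument together with the symmetry already established.

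Finally, with bilinearity and $J(0,y) = 0$ in hand, the claim $J(x,y) \in \{0, 4\} \pmod 8$ is a one-line consequence: applying bilinearity in the first argument to the decomposition $x \oplus x = 0$ gives $2 J(x,y) = J(x,y) + J(x,y) = J(x \oplus x, y) = J(0, y) = 0 \pmod 8$, whence $J(x,y) \equiv 0 \pmod 4$ and so lies in $\{0,4\}$. The argument presents no real obstacle; the only conceptual point worth emphasizing is that the $z$-independence of $J$ is not merely a corollary of the definition but the workhorse that makes the additivity argument close.
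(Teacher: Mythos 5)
Your proof is correct and takes essentially the same approach as the paper: the additivity identity is obtained by evaluating one of $J(x',y)$, $J(x'',y)$ at a shifted base point so that the cross terms cancel, and the value constraint $J(x,y)\in\{0,4\}$ follows from $2J(x,y)=0 \pmod 8$ (your derivation via $J(x\oplus x,y)=J(0,y)$ is the same computation the paper performs by substituting $z\to z\oplus x$). The only cosmetic difference is which of $x'$, $x''$ receives the shifted base point.
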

\begin{proof}
Let $x=x'\oplus x''$. 
Substituting $x\gets x'$
in Eq.~(\ref{qform1}) gives
\[
J(x',y)=q(x'\oplus y \oplus z) +q(z) - q(x'\oplus z) - q(y\oplus z).
\]
Substituting $x\gets x''$ and $z\gets z\oplus x'$ in Eq.~(\ref{qform1}) gives
\[
J(x'',y)=q(x\oplus y \oplus z) + q(x'\oplus z) - q(x\oplus z) - q(x'\oplus y \oplus z).
\]
This shows that $J(x,y)=J(x',y)+J(x'',y)$. 
The identities $J(0,y)=0$ and $J(x,y)=J(y,x)$ follow trivially from Eq.~(\ref{qform1}).
Replacing $z$ by $z\oplus x$ in  Eq.~(\ref{qform1}) yields
\[
J(x,y)=q(y\oplus z)+q(x\oplus z) - q(z)-q(x\oplus y \oplus z).
\]
Combining this and Eq.~(\ref{qform1}) one gets
$2J(x,y)=0$, that is,  $J(x,y)={0 \pmod 4}$. 
\end{proof}
As a corollary, one gets
$q(x\oplus z)-q(z)\in \{0,2,4,6\}$ for all $z\in \calK$ and for all $x\in \calL(\calK)$.
This can be checked by choosing $x=y$ in Eq.~(\ref{qform1})
and using the fact that $J(x,x)\in \{0,4\}$.

Suppose $g^1,\ldots,g^k\in \calL(\calK)$ is some fixed basis of $\calL(\calK)$,
$h\in \calK$ is some fixed shift vector, and $x\in \calK$. Then 
\[
x=h\oplus x_1g^1 \oplus \ldots \oplus x_k g^k, \quad x_i\in \{0,1\}.
\]
We shall write $\vec{x}\equiv (x_1,\ldots,x_k)$ to avoid
confusion between a point $x\in \calK$ and its 
coordinates.
Applying
Eq.~(\ref{qform1}) and Lemma~\ref{lemma:qform1} one can
describe $q$ in a basis-dependent way as 
\begin{equation}
\label{qform3}
q(\vec{x})=Q + \sum_{a=1}^k D_a x_a +  \sum_{1\le a<b\le k} J_{a,b} x_a x_b,
\end{equation}
where $Q\equiv q(h)\in \ZZ_8$, 
\begin{equation}
\label{qform4}
D_a=q(g^a\oplus h)-q(h) \in \{0,2,4,6\},
\end{equation}
\begin{equation}
\label{qform5}
J_{a,b}=J_{b,a} =J(g^a,g^b)\in \{0,4\}.
\end{equation}
We shall consider $J$ as a symmetric $k\times k$ matrix.
Although Eq.~(\ref{qform3}) depends only on off-diagonal matrix
elements of $J$, it will be convenient to retain the diagonal of $J$.
Combining Eqs.~(\ref{qform1},\ref{qform3}) one gets
\begin{equation}
\label{J(x,x)}
J_{a,a}=2D_a, \quad 1\le a\le k.
\end{equation}
 A connection between  quadratic forms and stabilizer states
is established by  the following lemma.
\begin{lemma}
\label{lemma:qform2}
Any $n$-qubit stabilizer state  can be uniquely written as 
\begin{equation}
\label{std}
 |\calK,q\rangle \equiv 2^{-k/2} \sum_{x\in \calK} e^{\frac{i\pi}4 q(x)}  |x\rangle,
\end{equation}
where $\calK\subseteq \FF_2^n$ 
is an affine space of dimension $0\le k\le n$
and $q\, : \, \calK \to \ZZ_8$ is a quadratic form.
\end{lemma}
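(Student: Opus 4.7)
The plan is to prove existence and uniqueness separately. Uniqueness is essentially immediate: the support of $|\calK,q\rangle$ in the computational basis is exactly the set $\calK$, so $\calK$ is determined by the state itself, and on this support the amplitude $2^{-k/2}e^{i\pi q(x)/4}$ has phase that is an $8$-th root of unity, which uniquely determines $q(x)\in \ZZ_8$ for each $x\in\calK$.

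For existence I would proceed by induction on the number of gates in a Clifford circuit $V$ with $|\psi\rangle = V|0^n\rangle$. The base case is trivial: $|0^n\rangle = |\{0^n\},0\rangle$ with $k=0$ and $q\equiv 0$. For the inductive step it suffices to check that each generator of the Clifford group preserves the form. For $CNOT_{i,j}$ the action on basis states is an invertible linear map $T\,:\,\FF_2^n\to \FF_2^n$, so $CNOT_{i,j}|\calK,q\rangle = |T\calK,\,q\circ T^{-1}\rangle$; the image $T\calK$ is an affine space of the same dimension and $q\circ T^{-1}$ still satisfies Definition~\ref{dfn:qform} since the second-difference relation (\ref{qform1}) pulls back under invertible linear maps. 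For $S_i$ the amplitude at $|x\rangle$ is multiplied by $i^{x_i}=e^{i\pi(2x_i)/4}$, sending $q\mapsto q+2x_i$; the added term is linear, so its contribution to the left-hand side of (\ref{qform1}) cancels and $q'$ is again a quadratic form.

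The only nontrivial case is $H_i$. I would choose a basis $g^1,\ldots,g^k$ of $\calL(\calK)$ and a shift $h\in\calK$ adapted to the action on the $i$-th coordinate: if the $i$-th coordinate is constant on $\calK$, then $H_i$ simply tensors in a fresh $|+\rangle$-like qubit (up to a phase $(-1)^{h_i y_i}$), doubling the dimension of $\calK$, and the new quadratic form is obtained by adjoining a single linear term. If instead the $i$-th coordinate varies on $\calK$, choose $g^1$ with $g^1_i=1$ and the remaining $g^a$ with $g^a_i=0$, and (by shifting $h$ if necessary) arrange $h_i=0$. Using the representation (\ref{qform3}) of $q$ in this basis, the amplitude of the output $|y_i,x_{\neq i}\rangle$ becomes
\begin{equation*}
e^{i\pi q(0,c_2,\ldots,c_k)/4}\Bigl(1+(-1)^{y_i}e^{i\pi A/4}\Bigr),\quad A=D_1+\sum_{b\geq 2}J_{1,b}c_b.
\end{equation*}
Since $D_1\in\{0,2,4,6\}$ and each $J_{1,b}\in\{0,4\}$, the parity of $D_1$ is preserved: when $D_1$ is divisible by $4$, the value of $A$ lies in $\{0,4\}$ and exactly one choice of $y_i$ survives for each $(c_2,\ldots,c_k)$, producing an output support of size $2^{k-1}$ with phases in $\{\pm 1\}$; when $D_1\in\{2,6\}$ both values of $y_i$ survive with amplitude $\sqrt{2}$, producing an output support of size $2^k$ with phases in $\{e^{\pm i\pi/4}\}$. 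In each subcase I would read off the new affine space $\calK'$ and phase function $q'(y_i,x_{\neq i})$ and verify directly that $q'$ satisfies (\ref{qform1}) by computing its second differences on generators of $\calL(\calK')$; the verification reduces to a finite number of cases using the properties of $J$ established in Lemma~\ref{lemma:qform1}.

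The main obstacle is precisely this Hadamard case: the case analysis and the verification that the resulting phase function is genuinely a quadratic form (rather than an arbitrary $\ZZ_8$-valued function on $\calK'$) require care, but everything else is routine. Uniqueness, $CNOT$, and $S$ are essentially a few lines each.
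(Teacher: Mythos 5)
Your proof takes a genuinely different route from the paper's: the paper disposes of this lemma in two lines by citing the known canonical form of stabilizer states from the literature and matching it against the explicit parametrization Eqs.~(\ref{qform3},\ref{qform4},\ref{qform5}), whereas you give a self-contained induction over the Clifford generators. That approach is legitimate and more informative (it effectively also derives the update rules used in the appendices), and your uniqueness argument and $CNOT$ case are fine. However, there is a genuine gap in the Hadamard case. Your dichotomy ``$x_i$ constant on $\calK$'' versus ``$x_i$ varies on $\calK$'' is not fine enough to support the interference formula you write. The amplitude $e^{i\pi q(0,c_2,\ldots,c_k)/4}\bigl(1+(-1)^{y_i}e^{i\pi A/4}\bigr)$ presupposes that the two candidate preimages $x^{(0)}$ and $x^{(0)}\oplus g^1$ of a fixed output string agree on all coordinates other than $i$, i.e.\ that $g^1=e^i$ is the $i$-th standard basis vector; this can be arranged by a basis change iff $e^i\in\calL(\calK)$. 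If $x_i$ is non-constant on $\calK$ but $e^i\notin\calL(\calK)$ (for example $\calK=\{00,11\}\subset\FF_2^2$ with $i=1$), then normalizing $g^1_i=1$, $g^a_i=0$ for $a\ge 2$ still leaves $g^1$ with support off coordinate $i$, so the two terms $x_1=0$ and $x_1=1$ land on \emph{different} output basis vectors: there is no interference, every output string has a unique preimage, and the support dimension grows to $k+1$ rather than staying at $k$ or dropping to $k-1$. This third subcase must be treated separately (it is the easy one --- it parallels your ``constant'' case, with the phase picking up $4x_iy_i$), but as written your case analysis would produce the wrong state there.

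A smaller point: for the $S$ gate the added term $\ell(x)=2x_i$ is not ``linear'' in the sense of having vanishing second difference under $\oplus$. A direct computation of $\ell(x\oplus y\oplus z)+\ell(z)-\ell(x\oplus z)-\ell(y\oplus z)$ gives $4x_iy_i \pmod 8$, so the effect on the form is $J\mapsto J+4x_iy_i$, not $J\mapsto J$. Your conclusion that $q'$ remains a quadratic form survives, because the second difference is still independent of $z$ and valued in $\{0,4\}$, but the stated justification (``its contribution cancels'') is incorrect and should be replaced by this computation. Likewise, ``doubling the dimension of $\calK$'' in the constant-coordinate Hadamard case should read ``increasing the dimension by one'' (the cardinality doubles).
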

\begin{proof}
The claim that any stabilizer state can be written 
in the form Eq.~(\ref{std}) 
follows from the explicit characterization of quadratic forms
Eqs.~(\ref{qform3},\ref{qform4},\ref{qform5}) 
and the canonical form of stabilizer states derived 
in Refs.~\cite{Nest2010,dehaene2003clifford,garcia2014hybrid}.
The uniqueness of the decomposition Eq.~(\ref{std}) is obvious.
\end{proof}
Next let us describe how the representation of $q$ 
transforms under various basis changes.
Suppose $R\in \FF_2^{k\times k}$ is an invertible  matrix.
Consider a basis change
\begin{equation}
\label{basis_change}
g^a\gets  \sum_{b=1}^k R_{a,b} \, g^b {\pmod 2}, 
\end{equation}
where $1\le a\le k$. The shift vector $h$ remains unchanged.
Applying Eq.~(\ref{qform3})
where $\vec{x}$ is chosen as the $a$-th row of $R$, 
one can easily check that the coefficients 
$(Q,D,J)$ 
transform according to $Q\gets Q$,
\begin{equation}
\label{updateJ1}
D_a \gets \sum_{b=1}^k R_{a,b} D_b + \sum_{1\le b<c\le k} J_{b,c} R_{a,b} R_{a,c},
\end{equation}
and
\begin{equation}
\label{updateJ2}
J\gets RJR^T.
\end{equation}
The matrix multiplications are performed in the ring $\ZZ_8$.
Next consider a basis change that  
alters the shift vector, 
\begin{equation}
\label{change_h}
h\gets h\oplus y, \quad \mbox{where} \quad
y=\sum_{a=1}^k y_a g^a {\pmod 2}.
\end{equation}
Using Eq.~(\ref{qform4})  one can easily check that the coefficients 
$(Q,D,J)$  transform according to
\begin{equation}
\label{updateJ4}
Q\gets Q+\sum_{a=1}^k D_a y_a + \sum_{1\le a<b\le k} J_{a,b} y_a y_b,
\end{equation}
\begin{equation}
\label{updateJ3}
D_a\gets D_a +  \sum_{b=1}^k J_{a,b} y_b,
\end{equation}
and $J\gets J$.

The above rules determine the
representation $(Q,D,J)$  of $q$ in any basis of $\calK$. 
What is the cost of computing this representation~?
Clearly, all updates can be expressed as a constant number of matrix-matrix (matrix-vector)
multiplications with $\ZZ_8$-valued matrices of size $k$.
Thus the updates have cost $O(k^3)$ in the worst case.
We shall often consider basis changes Eq.~(\ref{basis_change})
such that the matrix $R$ is sparse. Let $|R|$ be the total number
of non-zeros in $R$. Using sparse matrix-matrix multiplication
one can perform all updates in Eqs.~(\ref{updateJ1},\ref{updateJ2}) in time $O(|R|^2)$. 
Indeed, let $w_a$ be the number of non-zeros in the $a$-th row of $R$.
One can update $D_a$ and $J_{a,b}$ for any fixed $a,b$ in time
$w_a^2$ and $w_aw_b$ respectively. Thus $D$ and $J$ can be 
updated in time $O((\sum_{a=1}^k w_a)^2)=O(|R|^2)$.
Since the updates Eq.~(\ref{updateJ3},\ref{updateJ4}) require time $O(k^2)$
and $|R|\ge k$, the overall time is $O(|R|^2)$.
We conclude that computing  the representation $(Q,D,J)$ of $q$ in the new basis 
 takes time 
\begin{equation}
\label{update_cost}
\tau_{update}=O(\min{(k^3,|R|^2)}).
\end{equation}

In the rest of this section we show how to compute certain exponential
sums associated with quadratic forms, namely, 
\begin{equation}
\label{W}
W(q)\equiv \sum_{x\in \FF_2^k} e^{i\frac{\pi}4 q(\vec{x})},
\end{equation}
where $q(\vec{x})$ is defined by Eq.~(\ref{qform3}).
Of course, the addition in Eq.~(\ref{W}) is over the complex field. 
Our algorithm  takes as input the data $k,Q,D,J$ describing
$q(\vec{x})$ and outputs $W(q)$. The algorithm has running time $O(k^3)$.
It will be used  as a subroutine for computing the 
inner product between two stabilizer states, see Appendix~C.

It will be convenient to consider a more general sum
\begin{equation}
\label{W1}
W(\calK,q)=\sum_{x\in \calK} e^{i\frac{\pi}4 q(x)},
\end{equation}
where $\calK=\calL(\calK)\oplus h$ is an affine space
and $q\,: \, \calK\to \ZZ_8$ is a quadratic form on $\calK$. 
Clearly, Eq.~(\ref{W}) is a special case of Eq.~(\ref{W1}).
Let us say that $g^1,\ldots,g^k\in \calL(\calK)$ is a {\em  canonical basis} of $\calL(\calK)$
iff the set of basis vectors can be partitioned into disjoint
subsets
\begin{equation}
\label{subsets}
[k]=\calD_1\cup \ldots \cup \calD_r \cup M \cup S,
\end{equation}
such that
\begin{equation}
\label{c1}
|\calD_1|=\ldots=|\calD_r|=2, \quad |S|\le 1,
\end{equation}
\begin{equation}
\label{c0}
J_{a,a}=\left\{ \ba{rcl}
0 &\mbox{if} & a\notin S,\\
4 &\mbox{if} & a\in S.\\
\ea\right.
\end{equation}

\begin{equation}
\label{c2}
a\in M \quad \Rightarrow \quad J_{a,b}=0 \quad \forall b\in [k]\setminus S,
\end{equation}
and $\calD_i=\{a,b\}$ implies 
\begin{equation}
\label{c3}
J_{a,b}=4 \quad \mbox{and} \quad  J_{a,c}=J_{b,c}=0
\quad \forall c\notin S\cup \{a,b\}.
\end{equation}
Some of the subsets in Eq.~(\ref{subsets}) can be empty.
Let us write 
\[
\calD_j=\{ a(j), b(j)\}, \quad j=1,\ldots,r.
\]

Assume that $\calL(\calK)$ is already equipped with a canonical basis
$g^1,\ldots,g^k$
and show how to compute the sum $W(\calK,q)$.
Suppose first $S=\emptyset$. 
By repeatedly applying Eq.~(\ref{qform3}) 
and using Eqs.~(\ref{c1},\ref{c2},\ref{c3})
one can check that 
\begin{equation}
\label{canonical1}
q(\vec{x})=Q+  \sum_{j=1}^r q_j(x_{a(j)},x_{b(j)})
+ \sum_{c\in M} D_c x_c.
\end{equation}
where $q_j\, : \, \FF_2^2 \to \ZZ_8$ is defined by
\begin{equation}
\label{canonical2}
q_j(y,z)=4yz + D_{a(j)} y + D_{b(j)} z.
\end{equation}
Examination of Eqs.~(\ref{W1},\ref{canonical1},\ref{canonical2}) reveals that
the sum $W(\calK,q)$ factorizes into a product of $O(k)$ terms such that each 
term can be computed in time $O(1)$. 
Specifically,
\begin{equation}
\label{canonical3}
W(\calK,q)=e^{i\frac{\pi}4 Q} \cdot \prod_{c\in M} \left(
1+e^{i\frac{\pi}4 D_{c} } \right)\cdot \prod_{j=1}^r \Gamma_j,
\end{equation}
where
\begin{equation}
\label{canonical4}
\Gamma_j= 1+ e^{i\frac{\pi}4 D_{a(j)}} + e^{i\frac{\pi}4 D_{b(j)}}- e^{i\frac{\pi}4( D_{a(j)}+D_{b(j)})}
\end{equation}
Combining Eqs.~(\ref{canonical3},\ref{canonical4})
one can compute 
$W(\calK,q)$ in time $O(k)$. 
 
Consider now the remaining case  $S\ne \emptyset$.
Since $|S|\le 1$, we have $S=\{s\}$ for some  $s\in [k]$.
By repeatedly applying Eq.~(\ref{qform3}) 
and using Eqs.~(\ref{c1},\ref{c2},\ref{c3})
one can check that 
\begin{eqnarray}
\label{canonical5}
q(\vec{x})&=&Q+ D_s x_s+ \sum_{j=1}^r q_j(x_{a(j)},x_{b(j)},x_s) \nonumber \\
&&+ \sum_{c\in M } (D_c x_c + J_{c,s}x_c x_s),
\end{eqnarray}
where $q_j\, : \, \FF_2^3 \to \ZZ_8$ is defined by
\begin{equation}
\label{canonical6}
q_j(y,z,\sigma)=4yz + J_{a(j),s} y\sigma + J_{b(j),s} z\sigma + D_{a(j)} y + D_{b(j)} z.
\end{equation}
We have $W(\calK,q)=W_0+W_1$, where 
\begin{equation}
\label{split}
W_\sigma\equiv \sum_{x\in \calK\, : \, x_s=\sigma}\; e^{i\frac{\pi}4 q(\vec{x})}, \quad \sigma=0,1.
\end{equation}
Examination of Eqs.~(\ref{canonical5},\ref{canonical6}) reveals that
$W_\sigma$ factorizes into a product of $O(k)$ terms such that each 
term can be computed in time $O(1)$. 
Specifically,
\begin{equation}
\label{canonical7}
W_\sigma=e^{i\frac{\pi}4\left( Q + \sigma D_s\right)}\prod_{c\in M} \left(
1+e^{i\frac{\pi}4 (D_c+\sigma J_{c,s}) } \right)\cdot \prod_{j=1}^r \Gamma_j(\sigma),
\end{equation}
where 
\begin{eqnarray}
\label{canonical8}
\Gamma_j(\sigma)&=& 1+ \exp{\left[ i\frac{\pi}4\left( J_{a(j),s} \sigma  + D_{a(j)} \right)\right]}   \\
&&+ \exp{\left[i\frac{\pi}4\left( J_{b(j),s} \sigma  + D_{b(j)} \right)\right]} \nonumber \\
&& - \exp{\left[ i\frac{\pi}4\left( J_{a(j),s} \sigma  + J_{b(j),s} \sigma  + D_{a(j)} +D_{b(j)}  \right)\right]}.\nonumber
\end{eqnarray}
Combining Eqs.~(\ref{canonical7},\ref{canonical8})
one can compute $W_0+W_1$ in time $O(k)$. 

To transform an arbitrary basis
$g^1,\ldots,g^k$  of $\calL(\calK)$ into the canonical form
we shall use a version of the Gram-Schmidt orthogonalization.
It  involves at most $k$ basis changes 
Eq.~(\ref{basis_change}) with sparse matrices $R$ such that $|R|=O(k)$.
Computing the coefficients $(D,J)$ in the canonical basis thus takes time $O(k|R|^2)=O(k^3)$,
see Eq.~(\ref{update_cost}).

Recall that $D_a\in \{0,2,4,6\}$.
Define a subset 
\[
S=\{a\in [k]\, : \, D_a\in \{2,6\} \}.
\]
If $S$ is non-empty, pick an arbitrary element $s\in S$.
Perform a basis change $g^a\gets g^a\oplus g^s$ for each $a\in S\backslash s$.
From Eq.~(\ref{updateJ1})  one gets
$D_a\gets D_a + D_s + J_{a,s}\in \{0,4\}$ for all $a\in S\backslash s$
and $D_a\gets D_a$ for all $a\notin S$.
Set $S=\{s\}$.
Now we can assume that 
$D_a\in \{0,4\}$ for all $a\notin S$ for some subset $S$ such that
$|S|\le 1$. From Eq.~(\ref{J(x,x)}) we infer 
\begin{equation}
\label{canonical9}
J_{a,a}=0  \quad \mbox{for all $a\notin S$}.
\end{equation}
Let us say that a pair of basis vectors $(g^a,g^b)$ with $a,b\notin S$
is a dimer if  it obeys Eq.~(\ref{c3}), that is,
$J_{a,b}=4$ and $J_{a,c}=J_{b,c}=0$
for all $c\notin S\cup \{a,b\}$.
Note that a basis vector can belong to at most one dimer.
Let us say that a basis vector $g^a$ 
 with $a\notin S$ is 
a monomer if it obeys Eq.~(\ref{c2}), that is,
 $J_{a,b}=0$ for all $b\in [k]\setminus S$. 
Partition the set of basis vectors into four disjoint
sets, 
\begin{equation}
\label{canonical11}
[k]=\calD \cup M \cup S \cup E,
\end{equation}
such that $\calD$ is the union of all dimers,
$M$ is the union of all monomers, 
and $E$ 
is the complement of $\calD MS$.
By definition, a basis has a canonical form iff $E$ is empty.
Initially $\calD$, $M$ are empty, and $E$
is the complement of $S$.
Suppose $E$ is non-empty.
Pick any $a\in E$.
If $J_{a,b}=0$ for all $b\in E$, move $a$
from $E$ to $M$.
Otherwise $J_{a,b}=4$ for some $b\in E$.
Let us define a binary matrix $\pmb{J}$ corresponding to $J$
such that $\pmb{J}_{a,b}=1$ if $J_{a,b}=4$ and
$\pmb{J}_{a,b}=0$ otherwise.
Perform a basis change
\begin{equation}
\label{canonical12}
g^c\gets g^c \oplus \pmb{J}_{a,c} g^b \oplus \pmb{J}_{b,c} g^a \quad \mbox{for all $c\in E \setminus \{a,b\}$}.
\end{equation}
Using Eq.~(\ref{canonical9}) one can check that 
the new basis vectors obey $J(g^c,g^a)=J(g^c,g^b)=0$
for all $c\in \calD ME\setminus \{a,b\}$.
Thus we can
move $a,b$ from $E$ to $\calD$
by creating a new dimer $\calD_i=\{a,b\}$ in Eq.~(\ref{subsets}). 
By repeating the above steps  at most $k$ times one makes $E=\emptyset$.
Furthermore, the $R$ matrices corresponding to the basis change
Eq.~(\ref{canonical12}) are sparse since any row of $R$
contains at most three non-zero elements.  
Thus the original basis is transformed into the canonical form
by $O(k)$ basis changes Eq.~(\ref{basis_change}) with sparse matrices $R$
such that $|R|=O(k)$. This has cost $O(k|R|^2)=O(k^3)$.
We summarize the algorithm below.

\begin{center}
\fbox{\parbox{0.9\linewidth}{
\begin{algorithmic}
\Function{ExponentialSum}{$Q,D,J$} 
\State{$S\gets \{ a\in [k]\, : \, D_{a}\in \{2,6\}\}$}
\If{$S\ne \emptyset$}
  \State{Pick any $a\in S$}
  \For{$b\in S\setminus \{a\}$}
  \State{$g^b\gets g^b\oplus g^a$}
  \EndFor
  \State{Update $(D,J)$ using Eqs.~(\ref{updateJ1},\ref{updateJ2})}
  \State{$S\gets \{a\}$}
\EndIf
\State{\Comment{Now $J_{a,a}=0$ for all $a\notin S$}}
\State{$E\gets [k]\setminus S$}
\State{$M\gets \emptyset$}
\State{$r\gets 0$}
\While{$E\ne \emptyset$}
 \State{Pick any $a\in E$}
 \State{$K\gets \{b\in E\setminus a\, : \, J_{a,b}=4\}$}
 \If{$K=\emptyset$}
 \State{\Comment{Found a new monomer $\{a\}$}}
   \State{$M\gets M\cup a$}
   \State{$E\gets E\setminus a$}
  \Else
    \State{Pick any $b\in K$}
    \For{$c\in E\setminus \{a,b\}$}
    \State{$g^c \gets g^c \oplus \pmb{J}_{a,c} g^b \oplus \pmb{J}_{b,c}g^a$}
    \EndFor
     \State{Update $(D,J)$ using Eqs.~(\ref{updateJ1},\ref{updateJ2})}
     \State{\Comment{Now $\{a,b\}$ form a new dimer}}
       \State{$r\gets r+1$, $\calD_r\gets \{a,b\}$}
     \State{$E\gets E\setminus \{a,b\}$}
 \EndIf
\EndWhile
\If{$S=\emptyset$}
 \State{Compute $W(\calK,q)$ from Eq.~(\ref{canonical3})}
\Else
 \State{Compute $W_{0,1}$ from Eq.~(\ref{canonical7})}
 \State{Set $W(\calK,q)=W_0+W_1$}
\EndIf
 \EndFunction
\end{algorithmic}
}}
\end{center}
{\em Comments:}  The basis vectors $g^a$ only serve a 
notational purpose to  describe the basis change matrix $R$
that must be used in the update formulas Eqs.~(\ref{updateJ1},\ref{updateJ2}).
There are no actual data representing $g^a$ or operations performed with them.
For example, the first {\bf for} loop corresponds to a matrix
$R=I\oplus \sum_{b\in S\setminus \{a\}} (e^b)^T e^a$, where $e^a$
is the binary vector with a single `$1$' at the $a$-th position.
As was shown in Ref.~\cite{BSS15}, the sum
$W(\calK,q)$ can be represented by a triple of integers
$p\ge 0$, $m\in \ZZ_8$, and $\epsilon\in \{0,1\}$ such that
$W(\calK,q)=\epsilon \cdot 2^{p/2}\cdot e^{i\pi m/4}$.
 Our implementation of the algorithm uses such representation
for all intermediate sums to avoid roundoff errors.
Timing analysis  for a MATLAB implementation  is reported in Table~1.

\section*{Appendix B: Standard form of stabilizer states}
\label{app:B}

Suppose 
$|\calK,q\rangle\in \calS_n$ is a stabilizer state of $n$ qubits defined in Eq.~(\ref{std}).
An affine space $\calK=\calL(\calK)\oplus h \subseteq\FF_2^n$ 
of dimension $k$ 
will be represented by a tuple 
\[
(n,k,h\in \FF_2^n,G,\bar{G}\in \FF_2^{n\times n}),
\]
such that $\calL(\calK)$ is spanned by the first $k$ rows
of the matrix $G$ and $\bar{G}\equiv (G^{-1})^T$, that is,
\begin{equation}
\label{GGbar}
G\bar{G}^T=I {\pmod 2}.
\end{equation}
We shall write  $g^a$ and $\bar{g}^a$
for the $a$-th row of $G$ and $\bar{G}$ respectively.
Thus $\calL(\calK)=\mathrm{span}(g^1,\ldots,g^k)$
and $(g^a,\bar{g}^b)=\delta_{a,b}$ for $1\le a,b\le n$.
We shall refer to $g^a$ and $\bar{g}^a$ as the primal and 
the dual basis vectors. 

A quadratic form $q\, : \, \calK\to \ZZ_8$ will be specified
by a list of coefficients  $(Q,D,J)$
that describe  $q(\vec{x})$ in the 
basis $g^1,\ldots,g^k$ of $\calL(\calK)$, see 
Eqs.~(\ref{qform3},\ref{qform4},\ref{qform5}),
with the shift vector $h$.
Thus, a stabilizer state $|\calK,q\rangle$ of $n$ qubits
is described by the following data:
\[
(n,k,h,G,\bar{G},Q,D,J),
\]
where $Q\in \ZZ_8$, $D_1,\ldots,D_k\in \{0,2,4,6\}$,
and $J$ is a symmetric $k\times k$
such that $J_{a,b}\in \{0,4\}$ for all $a,b$.
A valid data must satisfy conditions Eq.~(\ref{GGbar})
and Eq.~(\ref{J(x,x)}).

We shall often use a subroutine that 
alters a stabilizer state $|\calK,q\rangle$ by shrinking 
the affine space $\calK$ reducing its dimension by one. Namely, 
consider a vector $\xi\in \FF_2^n$ and $\alpha\in \FF_2$.
Define 
\begin{equation}
\label{shrink}
\calM=\calK \cap \{ x\in \FF_2^n \, : \, (\xi,x)=\alpha \}.
\end{equation}
Clearly, $\calM$ is an affine space which is either empty,
or $\calM=\calK$, or $\calM$ has dimension $k-1$.  
Below we describe an algorithm that takes as input a stabilizer
state $|\calK,q\rangle$ and computes the standard form of
the state  $|\calM,q\rangle$ 
(or reports that $\calM$ is empty).
Here it is understood that the form $q$ is restricted onto $\calM$.
The algorithm has runtime $O(kn)$.
First we note that 
\[
\calM=h\oplus \{ y\in \calL(\calK)\, : \, (\xi,y)=\beta \},
\]
where $\beta=\alpha\oplus (\xi,h)$.
Let  
\[
S=\{ a\in [k]\, : \,  (\xi,g^a)=1\}.
\]
One can compute $S$ in time $O(kn)$. 
If $S=\emptyset$ and $\beta=1$ then $\calM$ is empty.
If $S=\emptyset$ and $\beta=0$ then $\calM=\calK$.
Otherwise pick any element $i\in S$
and remove $i$ from $S$.
Change the basis of $\calL(\calK)$ according to 
\[
g^a\gets g^a \oplus g^i \quad \mbox{for $a\in S$}.
\]
Change the dual basis according to
\[
\bar{g}^i \gets \bar{g}^i \oplus \sum_{a\in S} \bar{g}^a.
\]
Now $(g^a,\bar{g}^b)=\delta_{a,b}$ for all $a,b$.
The basis change requires time $O(kn)$. 
Let us also swap the $i$-th and the $k$-th basis vectors.
Updating the coefficients $(D,J)$  using Eqs.~(\ref{updateJ1},\ref{updateJ2})
takes time $O(k^2)=O(kn)$. 
Now  basis vectors $g^1,\ldots,g^{k-1}$  are orthogonal to $\xi$
and $(\xi,g^k)=1$. Thus 
\[
\calM=h' \oplus \mathrm{span}(g^1,\ldots,g^{k-1})\equiv h'\oplus \calL(\calM),
\]
where $h'=h\oplus \beta g^k$
is the new shift vector.  
Update the coefficients $(Q,D)$ 
using Eqs.~(\ref{updateJ4},\ref{updateJ3}), where 
$y=\beta g^k$. This takes time $O(k)$.
Now restricting the form $q$ onto $\calM$ is equivalent to removing the $k$-th row/column from the matrix $J$
and removing the $k$-th element from $D$.
We obtained the standard form of the state $|\calM,q\rangle$.
The entire  algorithm is summarized below.

\begin{center}
\fbox{\parbox{0.9\linewidth}{
\begin{algorithmic}
\Function{Shrink}{$|\calK,q\rangle,\xi,\alpha$}
\State{$S\gets \{ a\in [k] \, : \,  (\xi,g^a)=1\}$}
\State{$\beta \gets \alpha\oplus (\xi,h)$}
\If{$S=\emptyset$ and $\beta=1$}
\State{\Return{EMPTY}}
\EndIf
\If{$S=\emptyset$ and $\beta=0$}
\State{\Return{SAME}}
\EndIf
\State{Pick any $i\in S$}
\State{$S\gets S\setminus \{i\}$}
\For{$a\in S$}
\State{ $g^a\gets g^a \oplus g^i$}
\State{Update $(D,J)$ using Eqs.~(\ref{updateJ1},\ref{updateJ2})}
\EndFor
\State{$\bar{g}^i \gets  \bar{g}^i \oplus \sum_{a\in S} \bar{g}^a$}
\State{Swap $g^i$ and $g^k$. Swap $\bar{g}^i$ and $\bar{g}^k$.}
\State{Update $(D,J)$ using Eqs.~(\ref{updateJ1},\ref{updateJ2})}
\State{$h\gets h\oplus \beta g^k$}
\State{Update $(Q,D)$ using Eqs.~(\ref{updateJ4},\ref{updateJ3})}
\State{Remove the $k$-th row/column from $J$}
\State{Remove the $k$-th element from $D$}
\State{$k\gets k-1$}
\State{\Return{SUCCESS}}
\EndFunction
\end{algorithmic}
}}
\end{center}
To simplify notations, here we assume that the function SHRINK
modifies the data describing the input state.
The function reports whether the new affine space 
$\calK$ is empty or the same as the initial space.
It reports SUCESS whenever the dimension of the affine space
has been reduced by one. 
The function has runtime $O(kn)$.
Sometimes we shall use a ``lazy" version of the function that
does not update the coefficients of $q$. We shall use the notation
SHRINK${}^*$ for such lazy version.

 \section*{Appendix C: The inner product}
\label{app:C}

Consider a pair of $n$-qubit stabilizer states 
\[
|\phi_\alpha \rangle=|\calK_\alpha, q_\alpha\rangle, \quad \alpha=1,2
\]
with the standard forms 
$(n,k_\alpha,h_\alpha,G_\alpha,\bar{G}_\alpha,Q_\alpha,D_\alpha, J_\alpha)$.
Below we describe an algorithm that computes the inner product
\begin{equation}
\label{inner1}
\langle \phi_2 |\phi_1\rangle = 2^{-(k_1+k_2)/2} \sum_{x\in \calK_1\cap \calK_2} e^{i\frac{\pi}4 (q_1(x)-q_2(x))}.
\end{equation}
in time $O(n^3)$.
First we note that $x\in \calK_2$ iff 
\[
x\oplus h_2\in \calL(\calK_2)=\mathrm{span}(g_2^1,\ldots,g_2^{k_2}).
\]
Thus $x\in \calK_2$ iff $x\oplus h_2$ 
is orthogonal to all
dual basis vectors $\bar{g}_2^a$ with $k_2<a\le n$.
Here and below $g_\alpha^b$ and $\bar{g}_\alpha^b$ denote
the $b$-th row of $G_\alpha$ and $\bar{G}_\alpha$ respectively.
Thus 
\[
\calK\equiv \calK_1\cap \calK_2=\bigcap_{b=k_2+1}^n \{ x\in \calK_1 \, : \,  (\bar{g}_2^b,x)=(h_2,\bar{g}_2^b) \}.
\]
One can compute the standard form of $|\calK,q_1\rangle$
by $n-k_2$ calls to the function SHRINK defined in Appendix~B with
$\xi=\bar{g}_2^b$ and $\alpha=(h_2,\bar{g}_2^b)$ 
for $b=k_2+1,\ldots,n$. This takes time  
\[
\tau_1=O((n-k_2)k_1n)
\]
since we have to call SHRINK  $n-k_2$ times.

Let $\calK=(n,k,h,G,\bar{G})$ be the standard form of $\calK$ and $(Q_1,D_1,J_1)$  be the coefficients of $q_1$
restricted onto $\calK$ in the  basis $g^1,\ldots,g^k$
(as usual, $g^a$ is the $a$-th row of $G$).  

The next step so the compute coefficients $(Q_2,D_2,J_2)$  of the form  $q_2$ restricted to $\calK$
in the basis $g^1,\ldots,g^k$ with the shift vector $h$.
We note that 
\[
h=h_2\oplus \sum_{a=1}^{k_2} y_a g_2^a, \quad \mbox{where} \quad y_a=(h\oplus h_2,\bar{g}_2^a).
\]
One can compute $y_1,\ldots,y_{k_2}$   in time $O(k_2 n)$ and then 
compute the updated coefficients  $(Q_2,D_2)$ 
from Eqs.~(\ref{updateJ4},\ref{updateJ3}).
This takes time $O(k_2^2)$.
A simple algebra shows that $\calL(\calK)=\calL(\calK_1)\cap \calL(\calK_2)$,
that is, $g^a\in \calL(\calK_2)$ for all $a=1,\ldots,k$.
Define a matrix $R$ of size $k\times k_2$ such that 
\[
g^a=\sum_{b=1}^{k_2} R_{a,b} g_2^b {\pmod 2}, \quad 1\le a\le k.
\]
Using the dual basis of $\calK_2$ one gets
$R_{a,b}=(g^a,\bar{g}_2^b)$.
One can compute the entire matrix $R$ in time $O(kk_2n)$.
Then the  coefficients $(D_2,J_2)$  in the basis $g^1,\ldots,g^k$
can be computed from Eqs.~(\ref{updateJ1},\ref{updateJ2})
which takes time
$O(kk_2^2)$, see Eq.~(\ref{update_cost}).
(Here we used a slightly stronger version of Eq.~(\ref{update_cost})
taking into account that $R$ is a rectangular matrix.)
The runtime up to this point is 
\[
\tau_2=\tau_1+   O(kk_2n).
\]
Now the restrictions of both forms $q_1,q_2$ onto $\calK$
are defined in the same basis $g^1,\ldots,g^k$ and the same shift vector
$h$. Thus $q\equiv q_1-q_2$ has coefficients
$(Q,D,J)$, where $Q=Q_1-Q_2$, $D=D_1-D_2$, and $J=J_1-J_2$.
We get
\[
\langle \phi_2|\phi_1\rangle = 2^{-(k_1+k_2)/2} \cdot W(Q,D,J),
\]
where $W(Q,D,J)$ is the exponential sum Eq.~(\ref{W})
that can be computed in time $O(k^3)$, see Appendix~A.
The overall running time is thus
\[
\tau=\tau_2+O(k^3)=O( (n-k_2)k_1n + kk_2n  +k^3)=O(n^3).
\]
We summarize the entire inner product algorithm below. 

\begin{center}
\fbox{\parbox{0.9\linewidth}{
\begin{algorithmic}
\Function{InnerProduct}{$(|\calK_1,q_1\rangle, |\calK_2,q_2\rangle$} 
\State{$\calK\gets \calK_1$}
\For{$b=k_2+1$ to $n$}
\State{$\alpha\gets (h_2,\bar{g}_2^b)$}
\State{$\epsilon \gets$SHRINK$(|\calK,q_1\rangle,\bar{g}_2^b,\alpha)$}
\If{$\epsilon=$EMPTY}
\State{\Return{$0$}}
\EndIf
\EndFor
\State{\Comment{Now $\calK=\calK_1\cap \calK_2=(n,k,h,G,\bar{G})$}}
\For{$a=1$ to $k_2$}
\State{$y_a\gets (h\oplus h_2,\bar{g}_2^a)$}
\For{$b=1$ to $k$}
\State{$R_{b,a}\gets (g^b,\bar{g}_2^a)$}
\EndFor
\EndFor
\State{$h_2\gets h_2\oplus \sum_{a=1}^{k_2} y_a g_2^a=h$}
\State{Update $(Q_2,D_2)$ using Eqs.~(\ref{updateJ4},\ref{updateJ3}) with $y$}
\State{Update $(D_2,J_2)$ using Eqs.~(\ref{updateJ1},\ref{updateJ2}) with $R$}
\State{\Comment{Now $q_1$, $q_2$ are defined in the same basis}}
\State{$Q\gets Q_1-Q_2$}
\State{$D\gets D_1-D_2$}
\State{$J\gets J_1-J_2$}
\State{\Return{$2^{-(k_1+k_2)/2} \cdot$ExponentialSum$(Q,D,J)$}}
\EndFunction
\end{algorithmic}
}}
\end{center}
{\em Comments:} As before, we assume that 
the output is converted to a triple of integers $(\epsilon,p,m)$ such that 
$\langle \phi_2|\phi_1\rangle=\epsilon \cdot 2^{p/2}\cdot e^{i\pi m/4}$.
 If both $k_1$ and $k_2$ are small, one can compute the intersection
$\calK_1\cap \calK_2$ directly by solving a linear system 
\[
\sum_{a=1}^{k_1} x_a g_1^a \oplus \sum_{b=1}^{k_2} y_b g_2^b = h_1 \oplus h_2
\]
with $k_1+k_2$ variables and $n$ equations. This 
provides a shift vector and a basis for $\calK$ in time $O(n(k_1+k_2)^2)$.
Then one can compute the updated coefficients of $q_1$ and $q_2$
in the new basis in time $O(k(k_1^2+k_2^2))$.
Thus the overall running time is 
\begin{equation}
\label{fast2}
\tau=O(k_1^2n +k_2^2 n  + k^3)
\end{equation}
which is linear in $n$ provided that both $k_1,k_2=O(1)$.
We note however that the vast majority of stabilizer states have
$k_\alpha\approx n$, see Appendix~D, so the above method provides no speedup
in the generic case. 

The timing analysis of the function InnerProduct reported in Table~1 was performed for 
inner products $\langle\tilde{x}|\phi\rangle$,
where $\phi\in \calS_n$ is drawn from the uniform distribution (as described in Appendix~D),
$x\in \FF_2^n$ is a random uniformly distributed string,
and  $|\tilde{x}\rangle\equiv |\tilde{x}_1\otimes \tilde{x}_2\otimes \cdots \otimes \tilde{x}_n\rangle$,
where $|\tilde{0}\rangle=|0\rangle$ and $|\tilde{1}\rangle=H|0\rangle$.
This choice is justified since our simulation algorithm only requires 
inner products of the above form.  

\section*{Appendix D: Random stabilizer states}
\label{app:D}

Let us now describe
an algorithm that generates 
a random uniformly distributed stabilizer state $|\calK,q\rangle \in \calS_n$.
The algorithm has average-case runtime $O(n^2)$
and the worst-case runtime $O(n^3)$.

For each $0\le k\le n$ 
define a subset of stabilizer states
\[
\calS_n^k = \{ |\calK,q\rangle \in \calS_n \, : \, \dim{(\calK)}=k \}.
\]
For example, $\calS_n^0$ includes
all basis vectors, whereas 
$\calS_n^n$ includes stabilizer states
supported on all basis vectors.
Our algorithm first picks a random integer $d=0,1,\ldots,n$ 
drawn from a distribution 
\begin{equation}
\label{P1}
P(d)=\frac{|\calS_n^{n-d}|}{\sum_{m=0}^n |\calS_n^m|}
\end{equation}
and generates a random subspace $\calK\subseteq \FF_2^n$
of dimension $k=n-d$.
To compute $P(d)$ we need the following fact.
\begin{lemma}
\label{lemma:count}
\begin{equation}
\label{count}
|\calS_n^{n-d}|=8\cdot 2^{n+\frac12 \left[ n(n+1)-d(d+1)\right] } \cdot \prod_{a=1}^d \frac{1-2^{d-n-a}}{1-2^{-a}}.
\end{equation}
for any $d=1,\ldots,n$ and $|\calS_n^n|=8\cdot 2^{n+\frac12 n(n+1)}$.
\end{lemma}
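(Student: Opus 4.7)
The plan is to count stabilizer states $|\calK,q\rangle$ in $\calS_n^{n-d}$ by exploiting the canonical bijection (Lemma~\ref{lemma:qform2}) between stabilizer states and pairs $(\calK,q)$ consisting of an affine space $\calK\subseteq \FF_2^n$ of dimension $k=n-d$ together with a quadratic form $q:\calK\to \ZZ_8$. Thus
\[
|\calS_n^{n-d}| \;=\; N_{\mathrm{aff}}(n,k)\cdot N_{\mathrm{qf}}(k),
\]
where $N_{\mathrm{aff}}(n,k)$ counts $k$-dimensional affine subspaces of $\FF_2^n$ and $N_{\mathrm{qf}}(k)$ counts quadratic forms on a fixed such subspace.

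For the first factor, I would use that the number of $k$-dimensional linear subspaces of $\FF_2^n$ is the Gaussian binomial coefficient $\binom{n}{k}_2=\prod_{i=0}^{k-1}(2^{n-i}-1)/(2^{k-i}-1)$, and each such subspace has exactly $2^{n-k}$ cosets. Hence $N_{\mathrm{aff}}(n,k)=2^{n-k}\binom{n}{k}_2=2^{d}\binom{n}{d}_2$. For the second factor I would pick any basis $g^1,\ldots,g^k$ of $\calL(\calK)$ and any shift $h$; by Eq.~(\ref{qform3}) and the constraints noted in Lemma~\ref{lemma:qform1}, a quadratic form in this basis is uniquely encoded by a tuple $(Q,D,J)$ with $Q\in\ZZ_8$ (8 choices), $D_a\in\{0,2,4,6\}$ for $a=1,\ldots,k$ ($4^k$ choices), and a symmetric matrix $J$ with $J_{a,b}\in\{0,4\}$ for $a<b$ ($2^{k(k-1)/2}$ choices), the diagonal being forced by Eq.~(\ref{J(x,x)}). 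Conversely, any such tuple defines a valid quadratic form: one simply verifies Definition~\ref{dfn:qform} by a direct computation of the discrete ``second derivative''. This gives $N_{\mathrm{qf}}(k)=8\cdot 4^k\cdot 2^{k(k-1)/2}=8\cdot 2^{k(k+3)/2}$.

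Multiplying the two factors yields
\[
|\calS_n^{n-d}| \;=\; 8\cdot 2^{d+k(k+3)/2}\,\binom{n}{d}_2.
\]
To reach the stated form I would rewrite each ratio in $\binom{n}{d}_2=\prod_{a=1}^d(2^{n-d+a}-1)/(2^a-1)$ by pulling out a factor of $2^k$ from the numerator, giving $\binom{n}{d}_2=2^{kd}\prod_{a=1}^d(1-2^{d-n-a})/(1-2^{-a})$. Combining exponents and using $k=n-d$ reduces $d+k(k+3)/2+kd$ to $n+\tfrac12[n(n+1)-d(d+1)]$, which reproduces the formula for $d\ge 1$. The case $d=0$ is the degenerate one where $\binom{n}{n}_2=1$ and the product is empty, giving $|\calS_n^n|=8\cdot 2^{n+n(n+1)/2}$ directly.

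The main technical obstacle is checking that the parametrization $(Q,D,J)$ genuinely enumerates all quadratic forms on $\calK$ without over- or under-counting. Injectivity is immediate since $(Q,D,J)$ are determined by the values $q(h),q(h\oplus g^a),J(g^a,g^b)$ for basis vectors. Surjectivity requires showing that the formula in Eq.~(\ref{qform3}) always defines a function satisfying Eq.~(\ref{qform1}); this is a routine but slightly fiddly bilinearity check, essentially repeating the computation behind Eqs.~(\ref{qform3})--(\ref{qform5}) in reverse. Everything after that is an algebraic simplification.
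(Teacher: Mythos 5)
Your proposal is correct and follows essentially the same route as the paper: factor the count as (number of $k$-dimensional affine subspaces) $\times$ (number of quadratic forms on a fixed one), compute the first factor as $2^{n-k}$ times a Gaussian binomial and the second as $8\cdot 2^{2k}\cdot 2^{k(k-1)/2}$ via the $(Q,D,J)$ parametrization, and simplify. The algebra checks out, and your extra remark about verifying surjectivity of the $(Q,D,J)$ parametrization is a point the paper passes over implicitly.
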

\begin{proof}
Let $k\equiv n-d$.
The number of $k$-dimensional linear subspaces $\calL\subseteq \FF_2^n$ is known to be
\[
\Gamma_n^k = \Gamma_n^d=\prod_{m=0}^{d-1} \frac{2^n-2^m}{2^d-2^m}
\]
For a given $\calL$ there are $2^{n-k}$ affine spaces $\calK$
such that $\calK=\calL\oplus h$ for some shift vector $h$.
Finally, for a given affine space $\calK$ there are
\[
\Lambda_n^k = 8 \cdot 2^{2k} \cdot 2^{k(k-1)/2}
\]
quadratic forms $q\, : \, \calK\to \ZZ_8$.
Here the three factors represent the number of choices for
the coefficients $(Q,D,J)$ in Eqs.~(\ref{qform3},\ref{qform4},\ref{qform5})
respectively (recall that the diagonal of $J$ is determined by $D$,
see Eq.~(\ref{J(x,x)})).
It follows that $|\calS_n^k|=2^{n-k} \cdot \Gamma_n^k \cdot \Lambda_n^k$,
which gives Eq.~(\ref{count}).
\end{proof}
One can rewrite Eq.~(\ref{P1}) as 
\begin{equation}
\label{P2}
P(d)= \frac{\eta(d)}{\sum_{m=0}^n \eta(m)}, 
\end{equation}
where $\eta(0)=1$ and 
\[
\eta(d)=2^{-d(d+1)/2} \cdot \prod_{a=1}^d \frac{1-2^{d-n-a}}{1-2^{-a}}
\]
for $d=1,\ldots,n$.
One can compute a lookup table for the function $\eta(d)$
offline since it depends only on $n$. 
Clearly, $d=O(1)$ with high probability.
Thus, the average-case online complexity of sampling $d$  from 
the distribution $P(d)$ is $O(1)$. 

We start by choosing the zero shift vector such that $\calK$
is a random linear space of dimension $k$. 
We shall generate $\calK$ by repeatedly picking 
a random matrix $X\in \FF_2^{d\times n}$ until
$X$ has rank $d$ and then choosing 
$\calK=\ker{(X)}$. 
It is well-known that $X$ has rank $d$ with probability
\[
p_{n,d}=\prod_{a=0}^{d-1} (1-2^{-n+a}) \ge \max{ \{ 1/4, 1-2^{-n+d}\}}.
\]
Note that $p_{n,d}$ is exponentially close to $1$ whenever $d=O(1)$.
Thus $X$ has full rank after $O(1)$ attempts with high probability. 
Furthermore, one can compute
the rank of $X$ in time $O(nd^2)$ using the Gaussian elimination
by bringing $X$ into the row echelon form.
It is also well-known that conditioned on $X$ having full rank,
the subspace $\ker{(X)}$ is distributed uniformly on the set of
all  subspaces of $\FF_2^n$ of dimension $n-d$.
Thus we can choose $\calK=\ker{(X)}$.

The next step is computing $n\times n$ matrices $G$ and $\bar{G}$
such that $\calK$ is spanned by the first $k$ rows of $G$
and $G\bar{G}^T=I$.  Let us first set $\calK=\FF_2^n$ and $G=\bar{G}=I$.
Choose a zero quadratic form $q(x)=0$ for all $x\in \calK$.
Let $\xi^a$ be the $a$-th row of the matrix $X$.
One can make $\calK$ orthogonal to $\xi^1,\ldots,\xi^d$
by making $d$ calls to the  function
SHRINK${}^*(|\calK,q\rangle,\xi^a,0)$ defined in Appendix~B. 
(Recall that SHRINK${}^*$ does not update the coefficients of $q$.)
Finally we shift $\calK$ by a random uniformly distributed vector $h\in \FF_2^n$.
At this point $\calK$ is a random affine space represented in the standard form.
It remains to choose random coefficients of the  quadratic
form $q\, : \, \calK\to \ZZ_8$ in the basis $g^1,\ldots,g^k$.
Since $q$ must be distributed uniformly on the set of 
all quadratic forms $q\, : \, \calK\to \ZZ_8$,
we must choose $Q\in \ZZ_8$, $D_a\in \{0,2,4,6\}$,
and $J_{a,b}\in \{0,4\}$ for $a<b$ as random uniform elements of the respective sets.
Then the entire matrix $J$ is determined by $J_{b,a}=J_{a,b}$
and $J_{a,a}=2D_{a,a}$, see Eq.~(\ref{J(x,x)}).
The  entire algorithm is summarized below.

\begin{center}
\fbox{\parbox{0.9\linewidth}{
\begin{algorithmic}
\Function{RandomStabilizerState}{$n$} 
\State{Compute $P(0),\ldots,P(n)$ from Eq.~(\ref{P2})}
\State{Sample $d\in \{0,1,\ldots,n\}$  from $P(d)$}
\State{$k\gets n-d$}
\Repeat
\State{Pick random $X\in \FF_2^{d\times n}$}
\Until{$\mathrm{rank}{(X)}=d$}
\State{$G\gets I$, $\bar{G}\gets I$, $h\gets 0^k$}
\State{$\calK\gets (n,k,h,G,\bar{G})$}
\State{\Comment{Now $\calK=\FF_2^n$ is full binary space}}
\State{$q\gets$ all-zeros function on $\calK$}
\For{ $a=1$  to $d$}
\State{$\xi\gets$ $a$-th row of $X$}
\State{SHRINK${}^*(|\calK,q\rangle,\xi,0)$}
\EndFor
\State{\Comment{Now $\calK=\ker{(X)}$}}
\State{\Comment{$\calK$ has the standard form}}
\State{Pick random  $h\in \FF_2^n$}
\State{Pick random  $Q\in \ZZ_8$}
\State{Pick random $D_a\in \{0,2,4,6\}$}
\State{Pick random $J_{a,b}=J_{b,a}\in \{0,4\}$ for $a\ne b$}
\State{Set $J_{a,a}=2D_a {\pmod 8}$}
\State{\Return{$(n,k,h,G,\bar{G},Q,D,J)$}}
 \EndFunction
\end{algorithmic}
}}
\end{center}
Each call to SHRINK takes
time $O(n^2)$, see Appendix~B, whereas 
each computation of $\mathrm{rank}(X)$ takes time $O(dn^2)$.
Thus  the entire algorithm takes time $O(dn^2)$.
Since
$d=O(1)$ with high probability, see above, 
the average runtime is $O(n^2)$,
whereas the worst-case runtime is $O(n^3)$.
Timing analysis for a MATLAB implementation is reported in Table~1.

\section*{Appendix E: Pauli measurements}
\label{app:E}

Suppose $|\calK,q\rangle\in \calS_n$ is a stabilizer state
of $n$ qubits 
represented in the standard form and $P\in \calP_n$ is a  Pauli operator.
Define  an operator
\[
P_+ \equiv \frac12 (I+ P).
\]
It is well-known that $P_+$ maps stabilizer states to (unnormalized)
stabilizer states. 
Note that $P_+$ is a projector if $P$ is self-adjoint and
$\sqrt{2}P_+$ is a unitary Clifford operator if $P^\dag=-P$. 
Below we describe an algorithm that computes
the normalization and the standard form of the 
state $P_+|\calK,q\rangle$.
The algorithm has runtime $O(n^2)$. 
We shall be mostly interested in the case when $P_+$ is a projector
(although our algorithm applies to the general case). 
Note that a projector onto the codespace of any stabilizer code
with a stabilizer group
$\calG\subseteq \calP_n$ can be written as a product of at most $n$
projectors $P_+$ associated with some set of generators of $\calG$.
Thus a projected state $\Pi_\calG |\calK,q\rangle$ can be computed in time $O(n^3)$
using the above algorithm. 

Let $\calK=(n,k,h,G,\bar{G})$ be the standard form of $\calK$ and
\begin{equation}
\label{proj1}
P=i^m Z(\zeta)X(\xi), \quad m\in \ZZ_4, \quad  \xi,\zeta\in \FF_2^n.
\end{equation}
We shall consider two cases
depending on whether or not $\xi\in \calL(\calK)$. This inclusion can be checked
in time $O(kn)$ by computing inner products 
$\xi_a=(\xi,\bar{g}^a)$ with $a=1,\ldots,k$. Namely,
$\xi \in \calL(\calK)$ iff 
$\xi=\sum_{a=1}^k \xi_a g^a {\pmod 2}$.

\noindent
{\em Case~1:} $\xi \in \calL(\calK)$. 
Define a function 
\begin{equation}
\label{proj6}
\chi(x)=q(x\oplus \xi) -q(x).
\end{equation}
By definition of a quadratic form one has 
\begin{equation}
\label{proj7}
\chi(h\oplus y) = \chi(h) + J(\xi,y) \quad \mbox{for all $y\in \calL(\calK)$}.
\end{equation}
The state $P_+|\calK,q\rangle$ can be written  as
\begin{equation}
\label{proj8}
2^{-k/2-1}\sum_{x\in \calK} e^{i\frac{\pi}4 q(x)} \left(
1+ i^m (-1)^{(\zeta, x)} e^{i\frac{\pi}4 \chi(x)}\right)|x\rangle.
\end{equation}
Perform a change of variable $x=h\oplus y$ with $y\in \calL(\calK)$. 
Using Eq.~(\ref{proj7}) one can rewrite the above state as 
\begin{equation}
\label{proj9}
2^{-k/2-1}\sum_{y\in \calL(\calK)} e^{i\frac{\pi}4 q(h\oplus y)} \left(
1+ e^{i\frac{\pi}4 (\omega + \lambda(y))}\right)|h\oplus y\rangle,
\end{equation}
with
\begin{equation}
\label{proj10}
\omega=2m+4(\zeta, h)  + q(h\oplus \xi) - q(h)\in \{0,2,4,6\}
\end{equation}
and 
\begin{equation}
\label{proj11}
\lambda(y)=4(\zeta, y) + J(\xi,y)\in \{0,4\}.
\end{equation}
Let us first compute $\omega$. We have 
\begin{equation}
\label{xi}
\xi=\sum_{a=1}^k \xi_a g^a {\pmod 2}, \quad \xi_a=(\bar{g}^a,\xi).
\end{equation}
The decomposition Eq.~(\ref{xi}) can be computed in time 
 $O(kn)$. Once the coefficients $\xi_a$ are known, one can compute 
$\omega$ from 
\begin{equation}
\label{omega}
\omega= 2m + 4(\zeta,h) + \sum_{a=1}^k D_{a} \xi_a + \sum_{1\le a<b\le k} J_{a,b} \xi_a \xi_b.
\end{equation}
This takes time $O(kn)$.

Suppose first that $\omega\in \{0,4\}$. Then $e^{i\frac{\pi}4 \omega}=\pm 1$ and thus
\[
1+e^{i\frac{\pi}4 (\omega + \lambda(y))} = 
\left\{ \ba{rcl}
2 &\mbox{if} & \lambda(y)+\omega=0{\pmod 8} \\
0 &\mbox{if} & \lambda(y)+\omega=4{\pmod 8}. \\
\ea \right.
\]
We get 
\begin{equation}
\label{proj16a}
P_+|\calK,q\rangle=2^{-k/2}\sum_{x\in \calM}
e^{i\frac{\pi}4 q(x)}|x\rangle.
\end{equation}
where
\begin{equation}
\label{proj17a}
\calM=\calK\cap \{ x\in \FF_2^n \, : \, \lambda(h\oplus x)=\omega \}.
\end{equation}
Let us choose a vector $\gamma\in \FF_2^n$ such that 
$\lambda(y)=4(\gamma,y)$ for all $y\in \calL(\calK)$. We shall look for 
\begin{equation}
\label{eta0}
\gamma=\sum_{b=1}^k \eta_b \bar{g}^b {\pmod 2}, \quad \eta_b\in \{0,1\}.
\end{equation}
Choosing $y=g^a$ 
and using  $(g^a,\bar{g}^b)=\delta_{a,b}$ one gets 
\begin{equation}
\label{eta1}
4\eta_a=\lambda(g^a)=4(\zeta,g^a) + J(\xi,g^a), \quad 1\le a\le k.
\end{equation}
To compute $(\zeta,g^a)$ and $J(\xi,g^a)$ consider expansions
Eq.~(\ref{xi}) and 
\begin{equation}
\label{zeta}
\zeta=\sum_{a=1}^n \zeta_a \bar{g}^a {\pmod 2}, \quad \zeta_a=(g^a,\zeta).
\end{equation}
One can compute all the coefficients $\zeta_1,\ldots,\zeta_k$
in time $O(kn)$.  
The fact that $J(x,y)$ is a bilinear form implies 
\begin{equation}
\label{eta2}
4\eta_a=4\zeta_a +\sum_{b=1}^k J_{a,b} \xi_b, \quad 1\le a\le k.
\end{equation}
Thus $\eta_1,\ldots,\eta_k$ can be computed in time $O(kn)$.
Let $\omega=4\omega'$ with $\omega'\in \{0,1\}$.
We arrived at 
\[
\calM=\calK\cap \{x\in \FF_2^n \, : \,  (\gamma,x)=\alpha\}, \quad \alpha\equiv \omega' \oplus (\gamma,h).
\]
The standard form of
the state defined in Eq.~(\ref{proj16a}) can be computed 
by calling the function SHRINK$(|\calK,q\rangle, \gamma,\alpha)$,
see Appendix~B, which
takes time $O(kn)$.

Next suppose that $\omega\in \{2,6\}$. Then $e^{i\frac{\pi}4 \omega}=\pm i$ and thus
\[
1+e^{i\frac{\pi}4 (\omega + \lambda(y))} = 
\left\{ \ba{rcl}
\sqrt{2} e^{i\frac{\pi}4} &\mbox{if} & \lambda(y)+\omega=2{\pmod 8} \\
\sqrt{2} e^{-i\frac{\pi}4} &\mbox{if} & \lambda(y)+\omega=6{\pmod 8} \\
\ea \right.
\]
We shall choose a quadratic form $\lambda'\, : \, \calK\to \ZZ_8$ such that 
\begin{equation}
\label{lambda'}
\lambda'(h\oplus y)=\left\{ \ba{rcl} 0 &\mbox{if} & \lambda(y)=0,\\
2 &\mbox{if} & \lambda(y)=4.\\
\ea\right.
\end{equation}
Define 
\begin{equation}
\label{sigma}
\sigma =\left\{ \ba{rcl}
1 & \mbox{if} & \omega=2,\\
-1 & \mbox{if} & \omega=6.\\
\ea\right.
\end{equation}
Then the state in Eq.~(\ref{proj9}) can be written as
\begin{equation}
\label{proj12}
P_+|\calK,q\rangle=2^{-(k+1)/2}\sum_{x\in \calK} e^{i\frac{\pi}4 q'(x) } |x\rangle=2^{-1/2} |\calK,q'\rangle
\end{equation}
with a quadratic form 
\begin{equation}
\label{proj13}
q'(x)=\sigma + q(x) -\sigma \lambda'(x).
\end{equation}
To get the standard form of $|\calK,q'\rangle$ we need to
choose $\lambda'(x)$ satisfying Eq.~(\ref{lambda'}) and compute the coefficients
of $\lambda'(x)$ in the basis $g^1,\ldots,g^k$ of $\calL(\calK)$. 
First, let us compute the basis-dependent representation of $\lambda(y)$.
Suppose $y=\sum_{a=1}^k y_a g^a {\pmod 2}$
and let $\vec{y}=(y_1,\ldots,y_k)$.
Substituting Eqs.~(\ref{xi},\ref{zeta}) into Eq.~(\ref{proj11}) one  gets
\[
\lambda(\vec{y})=4\sum_{a=1}^k \eta_a y_a,
\]
where  $\eta_a\in \{0,1\}$ are defined by Eq.~(\ref{eta2}).
For any  $z_1,\ldots,z_k\in \{0,1\}$ one has the following identity:
\[
2(z_1\oplus \cdots \oplus z_k)=2\sum_{a=1}^k z_a -4 \sum_{1\le a<b\le k} z_a z_b {\pmod 8}.
\]
Choose $z_a=\eta_a y_a$ such that 
$\lambda(\vec{y})=4(z_1\oplus\cdots \oplus z_k)$.
Then a function $\lambda'(y)$ satisfying Eq.~(\ref{lambda'}) 
has a basis-dependent representation
$\lambda'(\vec{y})=2(z_1\oplus \cdots \oplus z_k)$, that is,
\begin{equation}
\label{proj14}
\lambda'(h\oplus y)=2\sum_{a=1}^k \eta_a y_a - 4 \sum_{1\le a<b\le k} \eta_a 
\eta_b \, y_b y_b.
\end{equation}
To summarize, the coefficients of the form $q'$ in the basis $g^1,\ldots,g^k$ are
$(Q',D',J')$, where
\begin{equation}
\label{proj15}
Q'=Q+\sigma, \quad D_a'=D_{a}-2\sigma \eta_a,
\end{equation}
and
\begin{equation}
\label{proj16}
J_{a,b}'=J_{a,b}+4\eta_a\eta_b \quad \mbox{for $a\ne b$}.
\end{equation}
This determines the standard form of $|\calK,q'\rangle$.

\noindent
{\em Case~2:} $\xi \notin \calL(\calK)$. 
Then $\xi\oplus x\notin \calK$ for any $x\in \calK$ and thus 
the states $|\calK,q\rangle$ and $P|\calK,q\rangle$
are supported on disjoint subsets of basis vectors. 
Define an affine space $\calM=\calL(\calM)\oplus h$
of dimension $k+1$, where $\calL(\calM)$ is spanned by $\calL(\calK)$
and $\xi$. We equip $\calL(\calM)$ with a basis
$g^1,\ldots,g^{k+1}$, where $g^{k+1}\equiv \xi$.
Then any vector $x\in \calM$ can be written in a basis-dependent
way as 
\[
x=h\oplus \sum_{a=1}^{k+1} x_a g^a {\pmod 2}.
\]
Let $\vec{x}=(x_1,\ldots,x_{k+1})$. 
A simple algebra shows that 
 \begin{equation}
\label{proj2}
P_+|\calK,q\rangle = 2^{-1-k/2} \sum_{x\in \calM} e^{i\frac{\pi}4 q'(x)} |x\rangle
=2^{-1/2} |\calM,q'\rangle,
\end{equation}
where $q'\, : \, \calM\to \ZZ_8$ is a quadratic form defined by
\begin{equation}
\label{qextend}
q'(\vec{x})= q(\vec{x}) + \left[ 2m + 4(\zeta,h\oplus \xi)\right] x_{k+1} +4\sum_{a=1}^k \zeta_a x_a x_{k+1}.
\end{equation}
Here it is understood that $q(\vec{x})$ depends only on the first $k$ coordinates
of $x$. Thus the coefficients of $q'$ in the chosen basis of $\calL(\calM)$ are
$Q'=Q$, $D'=[D,2m+4(\zeta,h\oplus \xi)]$, and  
\begin{equation}
\label{newJ}
J'=\left[ \ba{c|c}
J & 4\pmb{\zeta}^T \\
\hline
4\pmb{\zeta} & 4m  \\
\ea\right].
\end{equation}
Here $\pmb{\zeta}\equiv (\zeta_1,\ldots,\zeta_k)$ is a row vector.

It remains to compute the standard form of $\calM$.
Below we define  a function EXTEND$(\calK,\xi)$
that takes as input an affine space $\calK=(n,k,h,G,\bar{G})$ 
and a vector $\xi\in \FF_2^n$.
If $\xi \in \calL(\calK)$, the function does nothing.
Otherwise, the function  outputs
an affine space
$\calM=(n,k+1,h,H,\bar{H})$ such that
the first $k$ rows of $G$ and $H$ are the same
and the  $(k+1)$-th row of $H$
equals $\xi$. Since the function EXTEND
is very similar to the 
function SHRINK defined in Appendix~B,
we just state the algorithm skipping the analysis.

\begin{center}
\fbox{\parbox{0.9\linewidth}{
\begin{algorithmic}
\Function{$\calM$=Extend}{$\calK,\xi$}
\State{$S\gets \{ a\in [n]   \, : \,  (\xi,\bar{g}^a)=1\}$}
\State{$T\gets S\cap \{k+1,\ldots,n-1,n\}$}
\If{$T=\emptyset$}
\State{\Comment{$\xi\in \calL(\calK)$}}
\State{\Return{$\calK$}}
\EndIf
\State{Pick any $i\in T$}
\State{$S\gets S\setminus \{i\}$}
\For{$a\in S$}
\State{ $\bar{g}^a\gets \bar{g}^a \oplus \bar{g}^{i}$}
\EndFor
\State{$g^i \gets  g^i \oplus \sum_{a\in S} g^a$}
\State{\Comment{Now $g^i =\xi$}}
\State{Swap $g^i$ and $g^{k+1}$. Swap $\bar{g}^i$ and $\bar{g}^{k+1}$.}
\State{\Return{$(n,k+1,h,G,\bar{G})$}}
\EndFunction
\end{algorithmic}
}}
\end{center}
It has runtime $O(n^2)$.
We do not have to update the coefficients of $q'$ since
Eq.~(\ref{qextend}) defines $q'$ in the basis
$g^1,\ldots,g^k,\xi$ which coincides
with the new basis of $\calM$.
We conclude that the projected state $2^{-1/2}|\calM,q'\rangle$
can be computed in time $O(n^2)$.

Below we summarize the entire algorithm 
as a function MeasurePauli that takes as input a stabilizer state $|\calK,q\rangle\in \calS_n$ and a Pauli
operator $P\in \calP_n$. The function returns the norm of the projected
state $\Gamma=\| P_+|\calK,q\rangle \|$. If $\Gamma\ne 0$, the function
computes the standard form of the projected state $P_+|\calK,q\rangle$.
As before, we assume that the function can modify 
the data describing the input state.

\begin{center}
\fbox{\parbox{0.9\linewidth}{
\begin{algorithmic}
\Function{$\Gamma$=MeasurePauli}{$|\calK,q\rangle,P$}
\State{\Comment{$P=i^m Z(\zeta)X(\xi)$}}
\State{\Comment{$\calK=(n,k,h,G,\bar{G})$}}
\State{\Comment{$q=(Q,D,J)$}}
\For{$a=1$ to $k$}
\State{$\xi_a\gets (\bar{g}^a,\xi)$, $\zeta_a\gets (g^a,\zeta)$}
\EndFor
\State{$\xi'\gets \sum_{a=1}^k \xi_a g^a {\pmod 2}$}
\State{Compute $\omega\in \{0,2,4,6\}$ using Eq.~(\ref{omega})}
\If{$\xi'=\xi$ and $\omega\in \{0,4\}$}
\State{Compute $\eta_1,\ldots,\eta_k$  using Eq.~(\ref{eta2})}
\State{$\gamma\gets \sum_{a=1}^k \eta_a g^a {\pmod 2}$}
\State{$\omega'\gets \omega/4$}
\State{$\alpha \gets \omega'\oplus (\eta,h)$}
\State{$\epsilon \gets$SHRINK$(|\calK,q\rangle,\gamma,\alpha)$}
\If{$\epsilon=$EMPTY}
\State{$\Gamma \gets 0$}
\State{\Return}
\EndIf
\If{$\epsilon=$SAME}
\State{$\Gamma \gets 1$}
\State{\Return}
\EndIf
\If{$\epsilon=$SUCCESS}
\State{$\Gamma\gets 2^{-1/2}$}
\State{\Return}
\EndIf
\EndIf
\If{$\xi'=\xi$ and $\omega\in \{2,6\}$}
\State{$\sigma \gets 2-(\omega/2)$ }
\State{Compute $(Q',D',J')$ using Eqs.~(\ref{proj15},\ref{proj16})}
\State{$(Q,D,J)\gets (Q',D',J')$}
\State{$\Gamma \gets 2^{-1/2}$}
\State{\Return}
\EndIf
\If{$\xi'\ne \xi$}
\State{$\calK\gets$EXTEND$(\calK,\xi)$}
\State{$D\gets [D,2m+4(\zeta,h\oplus \xi)]$}
\State{$J\gets J'$, where $J'$ is defined in Eq.~(\ref{newJ})}
\State{$\Gamma\gets 2^{-1/2}$}
\State{\Return}
\EndIf
\EndFunction
\end{algorithmic}
}}
\end{center} 

\begin{table}[h]
\begin{tabular}{r|c|c|c|c|c|c}
Number of qubits & $\bf 10$ & $\bf 25$ & $\bf 50$ & $\bf 75$ & $\bf 100$ & $n$ \\
\hline
MeasurePauli & $0.27$ & $0.3$ & $0.4$ & $0.5$ & $0.6$ & $O(n^2)$ \\
\hline
RandomStabilizerState & $0.2$ & $0.3$ & $0.8$ & $1.7$ & $2.8$  & $O(n^2)$ \\
\hline
InnerProduct & $0.5$ & $1.5$ & $3.5$ & $6.5$ & $8.9$ &  $O(n^3)$ \\
\hline
ExponentialSum & $0.3$ &  $0.8$ & $2.2$ & $4.4$  &  $8$ & $O(n^3)$ \\
\end{tabular}
\caption{Average runtime in milliseconds for a MATLAB implementation of
our algorithms. Simulations were performed on a laptop with   2.6GHz Intel~i5 Dual Core CPU.
}
\label{table:1}
\end{table}

\section*{Appendix F: Simulation of the hidden shift algorithm}

\begin{figure*}
\includegraphics[height=4cm]{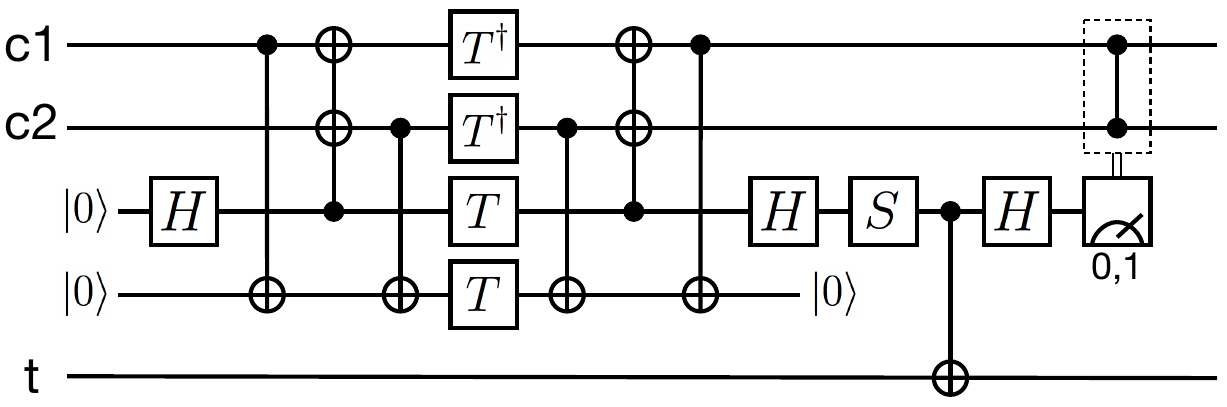}
\caption{Gadget from Ref.~\cite{Jones2013} implementing the Toffoli gate.
The two control qubits and the target qubit are denoted $c_1,c_2$ and $t$ respectively. Both measurement outcomes appear with probability $1/2$.
The final controlled-$Z$ gate on qubits $c_1,c_2$ is applied only if the measurement outcome is `$1$'.}
\label{fig:Toffoli}
\end{figure*}

Here we provide further details of the simulations reported in Fig.~\ref{fig:plots}. 
Recall that we  simulate a circuit
\begin{equation}
\label{HSP}
U= H^{\otimes n} O_{f'} H^{\otimes n} O_{f}H^{\otimes n},
\end{equation}
where $O_f |x\rangle =f(x)|x\rangle$ and
$O_{f'}|x\rangle=f'(x)|x\rangle$ are oracle circuits
for some bent functions 
$f,f'\,:\, \FF_2^n \to \{+1,-1\}$ such that 
\begin{equation}
f'(x)=2^{-n/2} \sum_{y\in \FF_2^n} (-1)^{x\cdot y}  f(y\oplus s) \quad \mbox{for all $x\in \FF_2^n$}.
\label{hadamard1}
\end{equation}
Here  $s\in \FF_2^n$ is the hidden shift that can be found from $|s\rangle=U|0^n\rangle$.
In our simulations the hidden shift $s$
was chosen at random from the uniform distribution. 
The  function $f$ was chosen from (a subclass of) the Maiorana McFarland family of bent functions. 
In general, a Maiorana McFarland bent function is defined as follows. Suppose $n$ is even. Let 
\[
g:\mathbb{F}_2^{n/2}\rightarrow \mathbb{F}_2 \quad \mbox{and} \quad
\pi:\mathbb{F}_2^{n/2}\rightarrow \mathbb{F}_2^{n/2}
\]
be any Boolean function and any permutation respectively.
 For any such pair $g,\pi$ we may define a bent function $f:\mathbb{F}_2^{n}\rightarrow \{+1,-1\}$
 according to
\begin{equation}
f(x,y)=(-1)^{g(x)+y\cdot \pi(x)} \qquad x,y\in \mathbb{F}_2^{n/2}.
\label{eq:fMM}
\end{equation}
The Hadamard transform of $f$ is given by
\begin{equation}
\label{f-dual}
 2^{-n/2} \sum_{u,v} (-1)^{u\cdot x + v\cdot y} f(u,v)
 =(-1)^{x\cdot \pi^{-1}(y)+g(\pi^{-1}(y))}.
\end{equation}
In our simulations we only used bent functions of the form Eq.~(\ref{eq:fMM}) with  $\pi=I$ (the identity permutation).
The Boolean function $g$ was chosen at random, as explained below.
Letting $O_g$ be the $n/2$-qubit diagonal unitary
\[
O_g|x\rangle=(-1)^{g(x)}|x\rangle \quad x\in \mathbb{F}_2^{n/2}
\]
we see that a quantum circuit which implements the $n$-qubit unitary oracle $O_f|x,y\rangle=f(x,y)|x,y\rangle$ can be decomposed as
\[
O_f=\left(\prod_{i=1}^{n/2} CZ_{i,i+n/2}\right)O_g\otimes I
\]
where $CZ=\mathrm{diag}(1,1,1,-1)$ is the two-qubit controlled-$Z$ gate. Here the tensor product separates the first $n/2$ qubits from the last $n/2$. Likewise, from Eqs.~(\ref{hadamard1},\ref{f-dual}) one infers that 
\[
O_{f'}=\left[  \left(\prod_{i=1}^{n/2} CZ_{i,i+n/2}\right) I\otimes O_g \right] Z(s)
\]
Note that the total $T$-count of the circuit $U$ is twice the $T$-count of $O_g$.
To construct a circuit implementing $O_g$  we chose a sequence of gates from the set
$\{Z,CZ,CCZ\}$, where  $CCZ$ is the controlled-controlled-Z gate.
We first fixed the number of $CCZ$ gates (five and six for the simulations reported in the left/right plots of Fig.~\ref{fig:plots} respectively), and then produced a  circuit $O_g$ alternating the $CCZ$ gates (on a randomly chosen triple of qubits) with random sequences of $200$ Clifford gates from the set $\{Z,CZ\}$.  
Note that the $CCZ$ gate  can be replaced by the Toffoli gate using the 
identity
\begin{equation}
 CCZ=(I\otimes I\otimes H) \mathrm{Toff} (I\otimes I\otimes H).
\label{eq:CCZ}
\end{equation}
To decompose Toffoli gates into Clifford and $T$-gates
we used a gadget proposed by Jones~\cite{Jones2013}, see Fig.~\ref{fig:Toffoli}.
The gadget uses four $T$-gates, two ancillary qubits initialized in the state $|0\rangle$,
several Clifford gates, and the $0,1$-measurement. The final Clifford gate is classically controlled by 
the measurement outcome. 
To simulate the gadget we use the trick described in the remark between Eqs.~(\ref{conditional}, \ref{AproxT}).
Namely, in our simulation the measurement of the ancillary qubit is replaced by postselection on a random output bit $y$ in exactly the same way as was done for the $T$-gate gadget. The second ancilla in the gadget is never measured and is returned to the state $|0\rangle$ at the output; this ancilla is reused by all Toffolis in the circuit.

The simulation algorithm we implemented differs in some small details from the algorithm analyzed in the main text of the paper. To produce each data point in Fig.~\ref{fig:plots} we first fixed the output qubit $q\in \{1,2,\ldots,40\}$. We then estimated the ratio (cf. Eq.~(\ref{tildePa}))
\begin{equation}
P^{y}_{out}(1) = \frac{ \langle 0^{N}\otimes \psi| V_y^\dag (|1\rangle\langle 1|_q \otimes |y\rangle\langle y|) V_y |0^{N} \otimes \psi\rangle}
{ \langle 0^{N}\otimes  \psi| V_y^\dag (I_{n+1} \otimes |y\rangle\langle y|) V_y |0^{N}\otimes \psi\rangle }.
\label{eq:Pout}
\end{equation}
for a randomly chosen postselection bit-string $y$. For us $N=n+n_{anc}$ where $n=40$ is the number of qubits in the original circuit to be simulated while $n_{anc}$ is the number of ancillae initialized in the state $|0\rangle$ which are used for the Toffoli gadgets. Since each Toffoli gadget requires two ancilla, one of which is shared by all of them, we have $n_{anc}=1+\mathrm{TF}$ where $\mathrm{TF}$ is the number of Toffoli gadgets used. In Eq.~(\ref{eq:Pout}) the number of postselection bits is $|y|=t+\mathrm{TF}$ where $t$ is the number of $T$-gates in the circuit (including the four $T$-gates within each Toffoli gadget). The unitary $V_y$ is a $(n+1+\mathrm{TF}+t)$-qubit Clifford unitary which is obtained by replacing all Toffoli gadgets and $T$ gate gadgets by the appropriate Clifford circuits obtained by postselecting on the measurement outcomes defined by the bit string $y$. Finally, the state $\psi$ in Eq.~(\ref{eq:Pout}) is a $t$-qubit state which approximates $t$ copies of the magic state $|A\rangle^{\otimes t}$. In particular, $\psi$ was derived from a $k$-dimensional subspace $\cal{L}$ of $\mathbb{F}_2^{t}$ in the manner described in the main text of the paper. In our simulations we used $k=11$ (left plot in Fig.~\ref{fig:plots}) and $k=12$ (right plot in Fig.~\ref{fig:plots}). The fidelities were $|\langle A^{\otimes t}|\psi\rangle|\approx 0.81$ and $|\langle A^{\otimes t}|\psi\rangle|\approx 0.69$ respectively.

To estimate $P^{y}_{out}(1)$ we computed integers $u,v$ and stabilizer groups $\cal{F},\cal{G}$ such that
\begin{equation}
\langle 0^{N}\otimes \psi| V_y^\dag (|1\rangle\langle 1|_q \otimes |y\rangle\langle y|) V_y |0^{N} \otimes \psi\rangle=2^{-u}\langle \psi |\Pi_{\cal{F}}|\psi\rangle
\label{eq:oneprob}
\end{equation}
and 
\begin{equation}
\langle 0^{N}\otimes \psi| V_y^\dag (|0\rangle\langle 0|_q \otimes |y\rangle\langle y|) V_y |0^{N} \otimes \psi\rangle=2^{-v}\langle \psi |\Pi_{\cal{G}}|\psi\rangle
\label{eq:zeroprob}
\end{equation}
and then, if $\Pi_{\cal{F}}\neq 0$ and $\Pi_{\cal{G}}\neq 0$, we computed approximations $\alpha,\beta$ to the quantities Eqs.~(\ref{eq:zeroprob},\ref{eq:oneprob}) using the norm estimation procedure described in the main text. The number of random stabilizer states sampled by the norm estimation procedure was chosen to be $100$ (left plot in Fig.~\ref{fig:plots}) or $50$ (right plot in Fig.~\ref{fig:plots}). Our estimate of $P_{out}^y(1)$ was then $\alpha/(\alpha+\beta)$ (cf. Eq.~(\ref{eq:Pout})). Note that if either $\Pi_{\cal{F}}=0$ or $\Pi_{\cal{G}}=0$ then $\alpha,\beta$, and $P_{out}^y(1)$ can be computed without ever calling the norm estimation subroutine. This special case occured for all qubits $1,2,\ldots, 20$ in both our simulations (as well as for some of the other data points).

\bibliographystyle{apsrev}
\bibliography{mybib}

\end{document}